\newtheorem{theorem}{Theorem}
\definecolor{cg}{rgb}{0.0, 0.7, 0.0}
\definecolor{dp}{rgb}{.7, 0.0, 0.3}
\begin{document}
 
\title{Numerical Construction of Multipartite Entanglement Witnesses}

\author{S. Gerke}\email{stefan.gerke@uni-rostock.de}
\affiliation{Arbeitsgruppe Theoretische Quantenoptik, Institut f\"ur Physik, Universit\"at Rostock, D-18051 Rostock, Germany}

\author{W. Vogel}
\affiliation{Arbeitsgruppe Theoretische Quantenoptik, Institut f\"ur Physik, Universit\"at Rostock, D-18051 Rostock, Germany}

\author{J. Sperling}
\affiliation{Clarendon Laboratory, University of Oxford, Parks Road, Oxford OX1 3PU, United Kingdom}

\date{\today}

\begin{abstract}
	Entanglement in multipartite systems is a key resource for quantum information and communication protocols, making its verification in complex systems a necessity.
	Because an exact calculation of arbitrary entanglement probes is impossible, we derive and implement a numerical method to construct multipartite witnesses to uncover entanglement in arbitrary systems.
	Our technique is based on a substantial generalization of the power iteration---an essential tool for computing eigenvalues---and it is a solver for the separability eigenvalue equations, enabling the general formulation of optimal entanglement witnesses.
	Beyond our rigorous derivation and direct implementation of this method, we apply our approach to several examples of complexly quantum-correlated states and benchmark its general performance.
	Consequently, we provide a generally applicable numerical tool for the identification of multipartite entanglement.
\end{abstract}

\maketitle

\section{Introduction}

	Quantum entanglement is one of the most fundamental concepts in physics.
	It was introduced in the pioneering works of Einstein \textit{et al.} \cite{EPR35} and Schr\"odinger \cite{SCH35}.
	The pure existence of this quantum phenomenon challenged previously established notions of correlations and paved the way towards a new interpretation of the nature of physics.
	Eventually, this led to new protocols used in quantum computing and communication, which utilize the resources of entangled quantum states \cite{NC00}.
	Examples of such classically infeasible tasks are quantum teleportation \cite{BBCJPW93} and dense coding \cite{BW92}.
	Other early protocols concern quantum key distribution, known as BB84 \cite{BB84} and E91 \cite{Ekert91}, and significantly improve communication security.
	Therefore, entanglement plays a key role in fundamental physics and technology-oriented applications.

	A primary concern in the research of entanglement is the actual detection of this quantum correlation.
	Since a lot of protocols for quantum technologies rely on the presence of entanglement, the question whether or not an experimentally generated state is entangled has become a highly relevant topic.
	However, determining entanglement of general states---likewise its counterpart, separability---is an NP-hard problem \cite{G03,I06}.

	Another challenge specific to multipartite systems is the possibility that classical and quantum correlations can be differently distributed among the parties of an ensemble of systems.
	This leads to complex structures of multipartite entanglement; see, e.g., Refs. \cite{HV13,LM13,SSV14}.
	Most notably, there are inequivalent forms of entanglement, which need to be distinguished.
	These are already present in systems of only three qubits, such as the prominent GHZ and W states \cite{DVC00}.
	Beyond that, current experiments become more and more capable of producing large-scale entanglement \cite{YUASKSYYMF13,CMP14,CRFAXFT17}.
	However, while entanglement is vital for characterizing such experiments, the tools to uncover highly quantum-correlated systems are rather limited, and the general verification remains an open problem.

	Still, several criteria have been developed to successfully determine entanglement in bipartite and multipartite systems; see Refs. \cite{T02,HHHH09,GT09} for thorough lists of these entanglement tests.
	A prominent example is the partial transposition criterion \cite{P96}, which has been generalized to general positive, but not completely positive maps \cite{HHH96}.
	Furthermore, such maps are equivalent to entanglement witnesses \cite{HHH96,HHH01,T00,T05}.
	A crucial point of using witness operators is their nature of being observables, which can be directly implemented in experiments.
	Another main advantage is that no full quantum state reconstruction is required to apply such witnesses.
	Rather, a few measurements of the observable can be sufficient to experimentally uncover entanglement \cite{GHBELMS03,BEKGWGHBLS04,TG05}.

	Consequently, witnesses have become a widely applied method for detecting entanglement.
	Their usefulness for quantum technologies has been shown to be promising by detecting entanglement of multipartite cluster states in theory and experiments; see, e.g., Refs. \cite{JMG11,KSWGTUW05}.
	Also, witnesses are not limited to specific systems; for example, they apply to trapped ions \cite{HHRBCCKRRSBGDB05} as well as hybrid systems which correlate vastly different degrees of freedom \cite{BRMM14}.
	In addition, device-independent witnesses have been proposed for a robust verification of entanglement \cite{BRLG13}.
	For instance, such device-independent witnesses can be constructed via so-called matrix-product extensions \cite{RH14}.

	An entanglement witness has a non-negative expectation value for separable states as it defines a hyperplane bisecting the set of states---one part containing at least all separable states and another part including exclusively entangled ones.
	In order to maximize the detectable range of entangled states, optimal witnesses have been introduced \cite{LKCH00,BCHHKLS02,HHMS15,WXCS15,SRLR17}.
	A universally applicable approach is the method of separability eigenvalue equations (SEEs) which enables the construction of optimal witnesses in the bipartite and multipartite scenarios \cite{SV09,SV13}.
	The solution of the SEEs renders it possible, in principle, to formulate all entanglement witnesses.
	However, because of the general complexity of the separability problem, exact solutions are only known for specific scenarios.
	Still, this has already led to deeper insights into the complex forms of experimentally generated multipartite entanglement \cite{GSVCRTF15,GSVCRTF16}.

	Once a witness-construction approach is realized, it can be applied to different physical systems and reveal more insight than the basic indication of entanglement.
	For example, entanglement in systems of indistinguishable particles can significantly differ from the case of distinguishable particles, but witnessing can be done in a similar manner \cite{ESBL02,OK13,RSV15}.
	Furthermore, the quantification of entanglement can be based on witnesses as well \cite{B05,BV06,AP06,EBA07}.
	This also includes entanglement tests for the so-called Schmidt number in the bipartite systems \cite{TH00,SBL01,SV11}, as well as its multipartite extension \cite{EB01,SSV14}.

	Since calculating witnesses is a hard problem and exact solutions are rare, a numerical approach is favorable.
	Numerical methods often use the convexity of the set of separable states.
	Prime examples are approaches based on semidefinite programming, used for the general, convex optimization of linear problems \cite{VB96}.
	The formulation of witnesses has the structure of exactly that kind of problem.
	Thus, semidefinite programming is a frequently applied method for probing entanglement \cite{R01,AM02,PDS02,E03,BV04,DPS05}.
	However, this approach addresses a general class of optimization tasks and is not specifically designed to address the properties of entangled systems.
	Consequently, such a general approach cannot present an optimal strategy to construct entanglement witnesses for arbitrary systems.
	Moreover, numerical standard approaches to solve the eigenvalue equations (EE), such as the well-known power iteration (PI) \cite{GvL13}, do not apply to the construction of entanglement witnesses via the nonlinear SEEs.

	In this contribution, we devise a numerical approach to construct multipartite entanglement witnesses by finding the maximal separability eigenvalue.
	Based on the properties of the SEEs, the analytical background is derived for our technique---termed the \textit{separability power iteration} (SPI).
	As a special case, our approach includes the PI, which returns the maximal solution of EEs.
	We implement the SPI algorithm numerically.
	This is used to demonstrate that the directed design of our numerical approach is an efficient method compared to standard techniques applicable to arbitrary optimization problems.
	To outline possible applications, we use our algorithm, for example, to verify entanglement of weakly correlated, i.e., bound-entangled, states in the bipartite and multipartite scenarios.
	Therefore, an accessible algorithm is provided which renders it possible to construct entanglement probes for certifying multipartite quantum correlations.

	We organize the paper as follows.
	Preliminary statements are made in Sec. \ref{Sec:Preliminaries}.
	Here, we introduce the framework used throughout the contribution and recollect information about entanglement.
	In Sec. \ref{Sec:SPI}, the SPI algorithm to find the maximal separability eigenvalue of a positive operator is introduced.
	Proofs for the working behavior and the convergence of the algorithm are given.
	We analyze the performance of our algorithm in Sec. \ref{Sec:Benchmark}.
	In Sec. \ref{Sec:Example}, entanglement in a selection of bound-entangled states is analyzed.
	In Sec. \ref{Sec:discussion}, we discuss the connection between the SPI and experimental measurements as well as other entanglement criteria and show the broad applicability of our newly devised method to different problems.
	We conclude in Sec. \ref{Sec:Conclusion}, where we also summarize our results.

\section{Preliminaries}\label{Sec:Preliminaries}

	In this section, we revisit multipartite entanglement and its verification.
	In particular, we concentrate on the previously introduced method of SEEs and its relation to standard EEs, which is essential for the following investigations.
	Eventually, we summarize these methods in the context of the considered problem which is solved by our numerical approach, the SPI.

\subsection{Multipartite entanglement}\label{subsec:MultipartiteEntanglement}

	Say $\mathcal{S}$ is the set of all pure states that are separable in an $N$-partite system.
	This means that the elements of $\mathcal S$ take a tensor-product form,
	\begin{align}
		|a_1,\ldots,a_N\rangle=\bigotimes_{j=1,\ldots,N}|a_j\rangle,
	\end{align}
	where $|a_j\rangle\in\mathcal H_j$ is an arbitrary state in the $j$th subsystem and $\langle a_j|a_j\rangle=1$ for $j=1,\ldots,N$.
	Furthermore, a mixed state $\hat{\sigma}$ is separable by definition \cite{W89} if it can be written as
	\begin{align}
		\label{Eq:Separability}
	        \hat{\sigma}=\int dP(a_1,\ldots,a_N)|a_1,\ldots,a_N\rangle\langle a_1,\ldots,a_N|,
	\end{align}
	where $P$ is a classical probability distribution over $\mathcal{S}$.
	Conversely, a state $\hat\rho$ is defined to be entangled if it cannot be expressed in this way.

	The given form of separability is also called full separability of an $N$-partite system.
	To consider instances of partial entanglement, we can assume that each of the $N$ parties is itself a composition of $K_j$ subsystems.
	This allows us to study arbitrary forms of partial separability---e.g., $N$-separability---in a system which, in total, consists of $K_1+\cdots+K_N$ subsystems.
	It is also worth mentioning that continuous-variable entanglement can always be detected in finite-dimensional subspaces \cite{SV09finite}.
	Hence, we can restrict ourselves to Hilbert spaces with a finite dimensionality, $d_j=\dim\mathcal H_j<\infty$.

\subsection{Entanglement witnesses}\label{subsec:EntanglementWitnesses}

	Based on the convexity of the set of separable states [cf. Eq. \eqref{Eq:Separability}], so-called entanglement witnesses, $\hat W$, have been introduced \cite{HHH96,HHH01,T00}.
	They fulfill the property that for all separable states $\hat{\sigma}$, the inequality $\mathrm{tr}(\hat{\sigma}\hat{W})\geq0$ holds true.
	Consequently, entanglement is detected if this inequality is violated, $\mathrm{tr}(\hat{\rho}\hat{W})<0$.
	In particular, it has been shown that witness operators can be written in the form \cite{T05,SV09}
	\begin{align}
		\label{Eq:witnessconstruction}
		\hat W=g_{\max}\hat 1-\hat L,
	\end{align}
	where $g_{\max}$ is the maximal expectation value of $\hat L$ for separable states.

	Therefore, the following approach is equivalent to the method of witnessing \cite{SV09,SV13}:
	For any entangled state $\hat \rho$, there is a Hermitian operator $\hat{L}$ such that the entanglement of $\hat\rho$ is certified by the criterion
	\begin{align}\label{Eq:EntanglementCondition}
		\mathrm{tr}(\hat{L}\hat{\rho})>g_{\max}.
	\end{align}
	The other way around, a state $\hat \sigma$ is separable if for all $\hat L$ the inequality $\mathrm{tr}(\hat{L}\hat{\sigma})\leq g_{\max}$ holds true.
	Moreover, it has been shown that it is sufficient to consider (normalized) positive-definite operators only; see, e.g., Ref. \cite{SV09}.
	We refer to operators satisfying
	\begin{align}\label{Eq:PositiveOperator}
		\hat L=\hat L^\dag>0
	\end{align}
	as positive operators in this work.
	To determine the bound $g_{\max}$, applied in the entanglement criterion \eqref{Eq:EntanglementCondition}, we introduce the SEEs \cite{SV09,SV13} (see also Appendix \ref{App:SEE}).

\subsection{Separability eigenvalue equations}\label{subsection:SeparabilityEigenvalueEquations}

	There are two equivalent forms of the SEEs \cite{SV13}.
	For this work, the more important representation of the SEE reads
	\begin{align}
		\label{Eq:SEE2}
		\hat{L}|a_1,\ldots, a_N\rangle=g|a_1,\ldots, a_N\rangle+|\chi\rangle.
	\end{align}
	Here, the vector $|\chi\rangle$ is $N$ orthogonal to $|a_1,\ldots, a_N\rangle$
	Namely, we have $\langle a_1,\ldots,a_{j-1},x,a_{j+1},\ldots,a_N|\chi\rangle=0$ for all $j=1,\ldots,N$ and for all $|x\rangle\in\mathcal{H}_j$.
	The normalized vector $|a_1,\ldots,a_N\rangle$ is the separability eigenvector.
	The real value $g$ is the separability eigenvalue, which can also be written as the expectation value of $\hat{L}$ with respect to the separability eigenvector,
	\begin{align}
		\label{Eq:gMax}
		g=\langle a_1,\ldots,a_N|\hat L|a_1,\ldots,a_N\rangle.
	\end{align}
	The disturbance to a standard EE, created by the $N$-orthogonal vector $|\chi\rangle$, couples the individual subsystems represented by the states $|a_j\rangle$.
	Thereby, it creates a highly nonlinear equation which, in general, cannot be solved straightforwardly.
	Furthermore, we can relate the separability eigenvalues to our necessary and sufficient entanglement criterion given in inequality \eqref{Eq:EntanglementCondition}.
	Namely, we have \cite{SV13}
	\begin{align}
		g_{\mathrm{max}}=\max\{g: \text{$g$ solves Eq. \eqref{Eq:SEE2}}\}.
	\end{align}

	Let us stress that the maximal separability eigenvalue is the solution to an optimization problem that maximizes the function $\langle a_1,\ldots,a_N|\hat L|a_1,\ldots,a_N\rangle$ for normalized, pure, and separable states.
	Moreover, using relation \eqref{Eq:gMax}, the value of $g_{\max}$ is determined through the corresponding separability eigenvector.
	Finding this vector $|a_1,\ldots,a_N\rangle$ is the goal of our algorithm to be introduced.
	Furthermore, the SEE in Eq. \eqref{Eq:SEE2} takes the form of a perturbed EE.
	In fact, for a single party ($N=1$), the vector $|\chi\rangle$ necessarily vanishes, which means that Eq. \eqref{Eq:SEE2} corresponds to the EE.
	This relation between the SEE and the EE is relevant for our algorithm.

	Furthermore, let us also recall properties of the SEEs, which are of particular importance for this work.
	First, the separability eigenvectors of the operator $\mu\hat L+\nu\hat 1$, for real numbers $\nu$ and $\mu\neq0$, are identical to those of the operator $\hat L$ \cite{SV13}.
	This allows us to restrict ourselves to positive operators, as mentioned above.
	
	The second property to be discussed here addresses the relations between the operators
	\begin{align}
		\hat L=|\xi\rangle\langle\xi|
		\quad\text{and}\quad
		\hat L'=\mathrm{tr}_{N}(\hat L),
	\end{align}
	where $|\xi\rangle\neq0$ is an arbitrary vector in the $N$-partite system and $\mathrm{tr}_{N}$ denotes the partial trace over the $N$th subsystem.
	This also implies that $\hat L'$ is positive semidefinite and acting on an $(N-1)$-partite system.
	The \textit{theorem of cascaded structures} \cite{SV13} states that the nonzero separability eigenvalues of $\hat L$ and $\hat L'$ are identical, which also implies that
	\begin{align}
		g_{\max}=g_{\max}'.
	\end{align}
	Moreover, the separability eigenvectors of $\hat L$ and $\hat L'$ read $|a_1,\ldots,a_N\rangle$ and $|a_1,\ldots,a_{N-1}\rangle$, respectively, where the $N$th component obeys
	\begin{align}
		\label{Eq:NthComponent}
		|a_N\rangle \parallel \langle a_1,\ldots, a_{N-1},\,\cdot\,|\xi\rangle.
	\end{align}
	This means that $|a_N\rangle$ is parallel to a vector that is obtained from $|\xi\rangle$ by projecting its first $N-1$ components onto $\langle a_j|$.
	We emphasize that the optimization of the expectation value of the operator $\hat L$ over $|a_1,\ldots,a_N\rangle\in\mathcal S$, i.e., $|\langle \xi|a_1,\ldots,a_N\rangle|^2\to\max$, corresponds to a maximization using $\hat L'$, which is defined in one subsystem less than used for $\hat L$.
	Also recall that the operator $\hat{L}'$ is, in general, not a rank-1 operator anymore, and the cascaded structure is applicable only to rank-1 operators.

\subsection{Preliminary discussion}

\begin{figure}[b]
	\centering
	\includegraphics[width=.85\columnwidth]{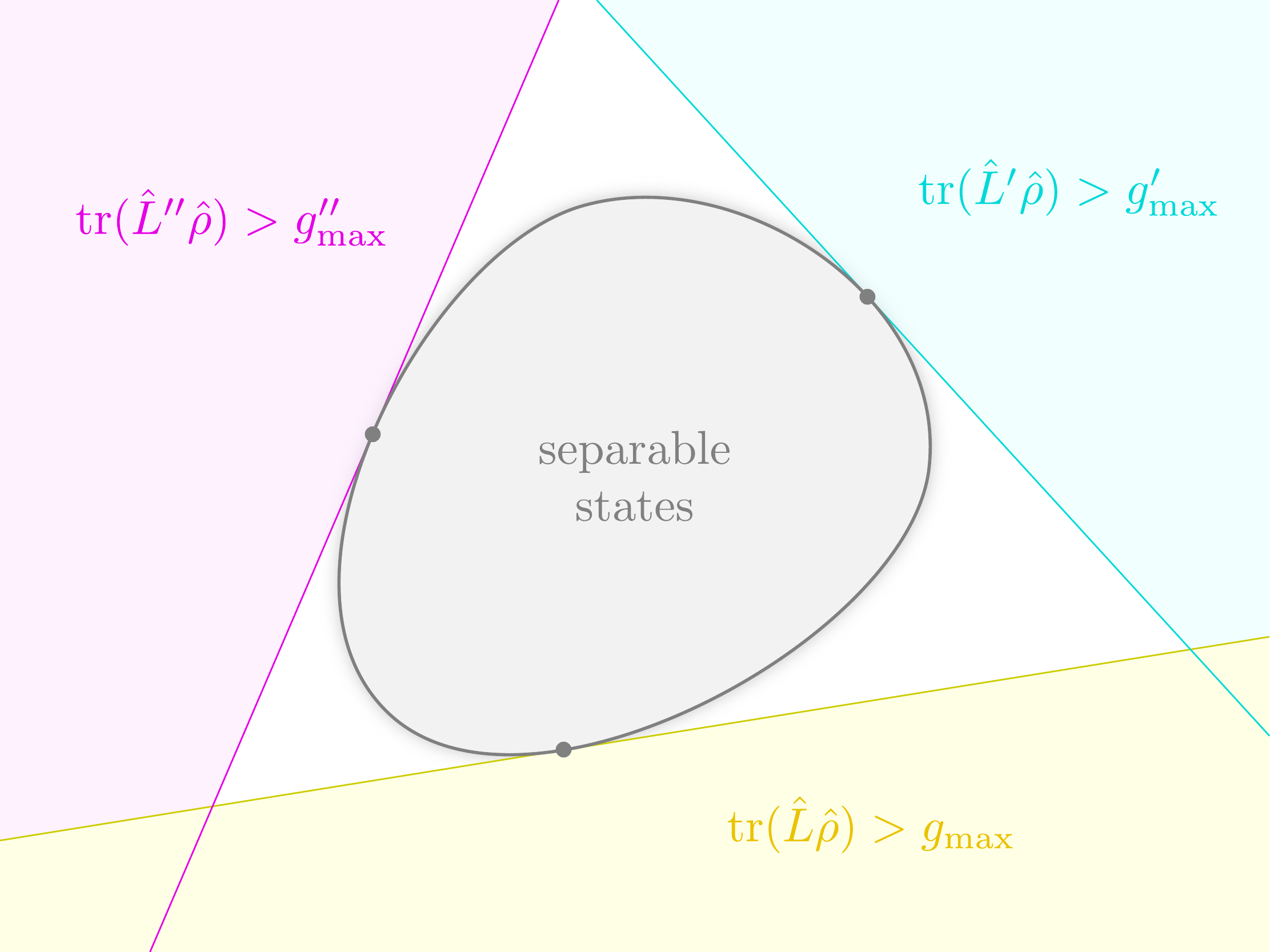}
	\caption{
		Visualization of three entanglement criteria.
		Entanglement is verified in the shaded half-spaces $\mathrm{tr}(\hat L\hat\rho)>g_{\max}$ (bottom, yellow area), $\mathrm{tr}(\hat L'\hat\rho)>g'_{\max}$ (right, cyan area), and $\mathrm{tr}(\hat L''\hat\rho)>g''_{\max}$ (left, magenta area), where the values $g_{\max}$, $g'_{\max}$, and $g''_{\max}$ are the maximal separability eigenvalues of the operators $\hat{L}$, $\hat{L}'$, and $\hat{L}''$ respectively.
		The boundaries define hyperplanes tangent to the set of separable states (gray area).
		The bullet points correspond to the separability eigenvector to the maximal separability eigenvalue for each operator.
	}\label{Fig:Witness}
\end{figure}

	In Fig. \ref{Fig:Witness}, we outline the previously discussed entanglement detection method using three different operators, labeled as $\hat L$, $\hat L'$, and $\hat L''$.
	The tangent hyperplanes separate the set of separable states from states that are verified to be entangled.
	The touching points of the tangent represent the separability eigenvectors to the maximal separability eigenvalue.
	In general, the more operators are used, the better the hyperplanes can approximate the bounds of the set of separable states and the more entangled states can be identified.
	Note that one can construct a dense set of operators for such an approximation with arbitrarily high precision; see, e.g., Ref. \cite{SV09}.

	Both the construction of multipartite entanglement witnesses and the approximation of the set of separable states depend on the solution of the SEEs.
	Specifically, we need to find the maximal separability eigenvalue, which is determined through its corresponding separability eigenvector.
	However, the SEEs present a sophisticated mathematical problem, which has at least the complexity of the standard eigenvalue problem \cite{comment:complexity}.
	In fact, independently of our specific approach, the separability problem has been shown to be an NP-hard problem \cite{G03,I06}.

	Furthermore, the SEE in Eq. \eqref{Eq:SEE2} shares a number of properties with the EE, $\hat L|z\rangle=g|z\rangle$.
	For the latter EE, there exists an algorithm to compute the eigenvector to the maximal eigenvalue of any positive operator $\hat L$, the PI \cite{GvL13}.
	In this algorithm, a vector $|z\rangle$ is mapped onto a new normalized vector, $|z'\rangle=\hat L|z\rangle/\langle z|\hat L^2|z\rangle^{1/2}$.
	An $s$-step iteration, $|z\rangle$, $|z'\rangle$, $|z''\rangle$, $\ldots$, $|z^{(s)}\rangle$, yields a vector that approaches, for $s\to\infty$, an eigenvector to the maximal eigenvalue of $\hat L$ for any initial vector that is not already an eigenvector to $\hat L$.

	In the following, we aim to generalize the PI to be applicable to the SEE.
	For this reason, we introduce an algorithm for a numerical implementation, which yields the desired solution of the SEEs---a separability eigenvector to the maximal separability eigenvalue.
	The resulting SPI algorithm is applicable to all positive operators $\hat L$ and enables the construction of witnesses to probe multipartite entanglement.

\section{The SPI algorithm}\label{Sec:SPI}

	In this section, we present the SPI algorithm---step by step.
	The flowchart of this algorithm to construct entanglement criteria is shown in Fig. \ref{Fig:Algorithm}.
	Our approach yields the separability eigenvector $|a_1,\ldots,a_N\rangle$ to the desired, maximal separability eigenvalue for a positive operator $\hat L$ [Eq. \eqref{Eq:gMax}].
	Before we study the individual, essential parts of the SPI in a rigorous mathematical framework, let us first get a general overview of how our algorithm operates by applying it to an example.

%!!! FALSE POSITIONING IN SOURCE FILE FOR PROPER POSITION IN OUTPUT
\begin{figure*}
	\centering
	\includegraphics[width=.85\textwidth]{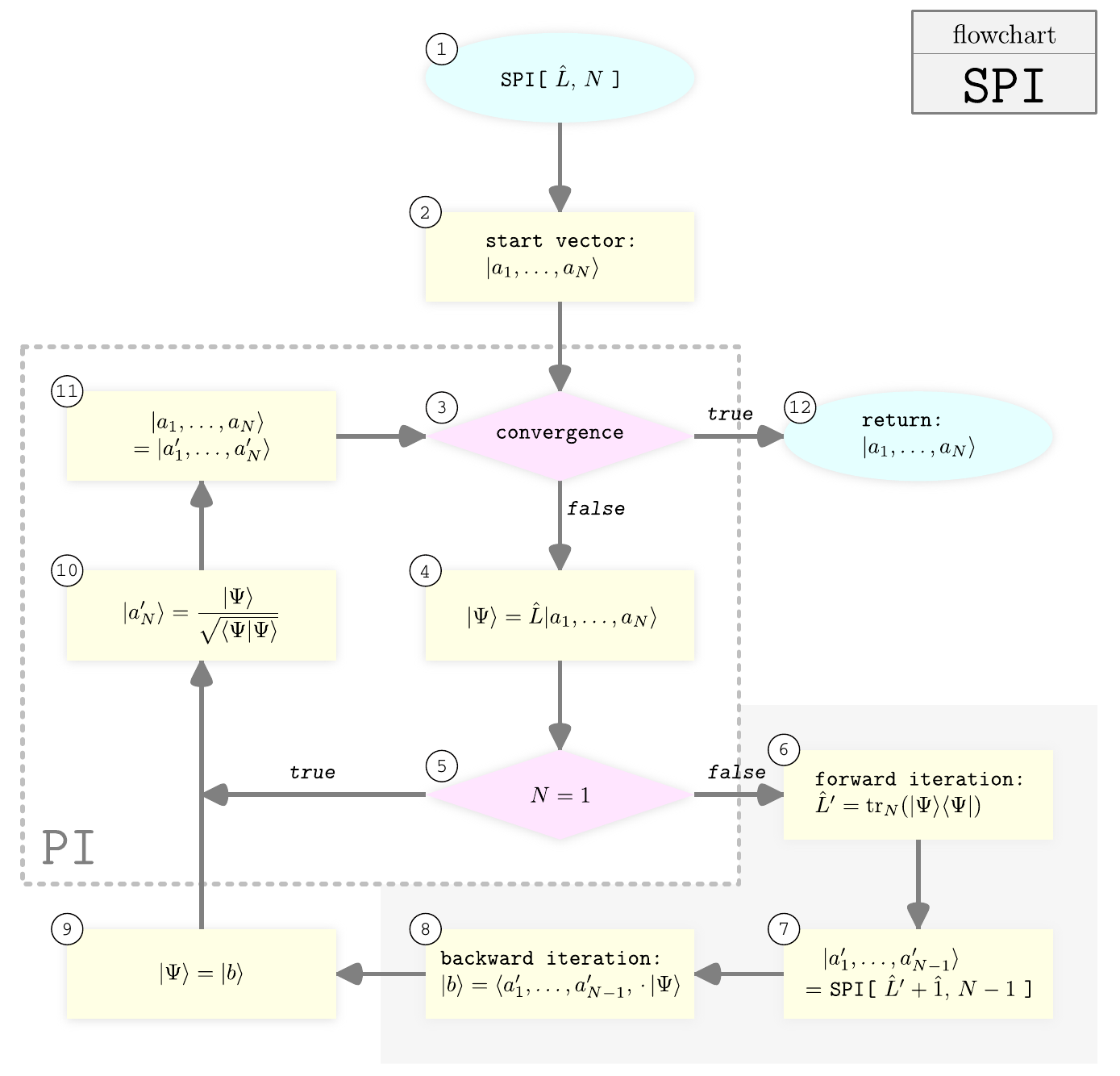}
	\caption{
		Flowchart of the SPI algorithm.
		Branches in the algorithm, either ``while'' loops or ``if'' conditions, are represented by magenta diamonds.
		Yellow rectangles represent assignments and function calls.
		Entry and exit points of the algorithm are shown as cyan ellipses.
		Box \textcircled{\scriptsize 1} refers to the input, which includes an operator and the number $N$ of parties.
		A vector is generated in \textcircled{\scriptsize 2} to serve as our starting vector.
		If the convergence criterion \textcircled{\scriptsize 3}, studied in Sec. \ref{subsec:Convergence}, is not met, we generate another $N$-partite vector in \textcircled{\scriptsize 4}.
		For $N=1$ in \textcircled{\scriptsize 5}, the algorithm corresponds to the power iteration (PI) which finds the standard eigenvector to the maximal eigenvalue (dashed box).
		For $N>1$, our extension consists of three essential parts (gray areas), which are the forward iteration \textcircled{\scriptsize 6} and the backward iteration \textcircled{\scriptsize 8} as well as a recursion calling the SPI for $N-1$ parties in \textcircled{\scriptsize 7}.
		Eventually, when \textcircled{\scriptsize 3} is satisfied, the desired separability eigenvector is the output of our SPI algorithm and returned in \textcircled{\scriptsize 12}.
	}\label{Fig:Algorithm}
\end{figure*}
%!!! FALSE POSITIONING IN SOURCE FILE FOR PROPER POSITION IN OUTPUT

\subsection{Proof of concept}\label{Sec:Concept}

	For demonstrating the function of our algorithm, we consider the bipartite ($N=2$) and positive operator
	\begin{align}
		\label{Eq:ExampleRunOperator}
		\hat L=2\hat 1-\hat V,
	\end{align}
	where $\hat V$ is the swap operator, $\hat V|a_1,a_2\rangle=|a_2,a_1\rangle$.
	The expectation value $\langle a_1,a_2|\hat L|a_1,a_2\rangle=2-|\langle a_1|a_2\rangle|^2$ directly implies that the maximal separability eigenvalue is $g_{\max}=2$, and it is attained for $|a_1\rangle\perp|a_2\rangle$ \cite{SV09}.
	This exact result serves as our reference to assess the success of our algorithm for this example.
	Moreover, since the maximal standard eigenvalue is three, it follows from $g_{\max}<3$ that this operator can be used to detect entanglement \cite{SV11}.
	In fact, the swap operator is related to the prominent partial transposition criterion to verify entanglement \cite{P96,HHH96,SV09}.

	Our algorithm in Fig. \ref{Fig:Algorithm} is initialized at point \textcircled{\scriptsize 1} with the operator \eqref{Eq:ExampleRunOperator} and the number of subsystems being $N=2$.
	At \textcircled{\scriptsize 2}, let us begin with states $|a_1,a_2\rangle$, which are neither parallel nor orthogonal, to exclude the trivial cases.
	Namely, we have $0< |\gamma|^2<1$, where
	\begin{align}
		\label{Eq:ExampleRunGamma}
		\gamma=\langle a_1|a_2\rangle.
	\end{align}
	Say that in step \textcircled{\scriptsize 3}, we do not have convergence yet; i.e., we follow the branch labeled ``false'' and compute the vector in step \textcircled{\scriptsize 4},
	\begin{align}
		|\Psi\rangle=\hat L|a_1,a_2\rangle
		=2|a_1,a_2\rangle-|a_2,a_1\rangle.
	\end{align}
	Since $N\neq 1$ (step \textcircled{\scriptsize 5}), we proceed to \textcircled{\scriptsize 6} and compute the operator,
	\begin{align}
	\begin{aligned}
		& \hat L'=\mathrm{tr}_2|\Psi\rangle\langle\Psi|
		\\
		= & 4|a_1\rangle\langle a_1|-2\gamma|a_1\rangle\langle a_2|
		-2\gamma^\ast |a_2\rangle\langle a_1|+|a_2\rangle\langle a_2|,
	\end{aligned}
	\end{align}
	which is a single-subsystem operator.
	This step is referred to as \textit{forward iteration} in Fig. \ref{Fig:Algorithm}.
	The idea behind this step is a result of the theorem of cascaded structures, which finds the maximal separable projection onto the state $|\Psi\rangle$; see Sec. \ref{subsection:SeparabilityEigenvalueEquations}.
	This also allows us to apply the SPI to $\hat L'+\hat 1$.
	As $\hat L'$ is positive semidefinite by construction, the addition of $\hat 1$ assures the positivity of $\hat L'+\hat 1$ without modifying the separability eigenvectors.

	Calling the SPI with $N\mapsto N-1=1$ in \textcircled{\scriptsize 7} and thereby going back to step \textcircled{\scriptsize 1} and going through steps \textcircled{\scriptsize 2} to \textcircled{\scriptsize 5}, we follow the branch for which $N=1$ is true.
	This gives an iteration of steps \textcircled{\scriptsize 10}, \textcircled{\scriptsize 11}, \textcircled{\scriptsize 3}, \textcircled{\scriptsize 4}, and \textcircled{\scriptsize 5}, indicated through the dashed box in Fig. \ref{Fig:Algorithm}, which describes the PI.
	The PI is employed for  solving the standard EE numerically by returning the eigenvector to the maximal eigenvalue of a positive operator with an arbitrarily high precision.
	So we can assume that the convergence \textcircled{\scriptsize 3} is true after some iterations of the PI.
	For the given operator $\hat L'+\hat 1$ (thus, also for $\hat L'$), the eigenvector to the maximal eigenvalue reads
	\begin{align}
		\label{Eq:ExampleRunState1}
		|a_1'\rangle=\frac{1}{\nu}\left(
			4|\gamma||a_1\rangle
			+
			[\Gamma-3]\frac{\gamma^\ast}{|\gamma|}|a_2\rangle
		\right),
	\end{align}
	using $\gamma$ [cf. Eq. \eqref{Eq:ExampleRunGamma}], the abbreviation
	\begin{align}
		\Gamma=\sqrt{9-8|\gamma|^2},
	\end{align}
	and the normalization $\nu=[2\Gamma(\Gamma-3+4|\gamma|^2)]^{1/2}$.

	Thus, the PI basically returns the vector $|a_1'\rangle$ in step \textcircled{\scriptsize 12}, which is used to continue with the case $N=2$, where we exit step \textcircled{\scriptsize 7} to perform the \textit{backward iteration} step \textcircled{\scriptsize 8}.
	This gives a vector in the second subsystem.
	For convenience, this vector is renamed (step \textcircled{\scriptsize 9}) and normalized (step \textcircled{\scriptsize 10}); see Fig. \ref{Fig:Algorithm}.
	Again, the backward iteration is a result of the theorem of cascaded structures, which relates the separability eigenvectors of $\hat L'$ for $N-1$ subsystems with those of the initial operator $\hat L$ for $N$, cf. Sec. \ref{subsection:SeparabilityEigenvalueEquations}.
	This yields the state of the second subsystem,
	\begin{align}
		\label{Eq:ExampleRunState2}
		|a_2'\rangle=\frac{1}{\nu}\left(
			4|\gamma||a_2\rangle
			+
			[\Gamma-3]\frac{\gamma}{|\gamma|}|a_1\rangle
		\right).
	\end{align}
	Thus, we obtain a new separable state $|a'_1,a'_2\rangle$ in \textcircled{\scriptsize 11}, where the tensor-product state is formed.

	What did we achieve with the construction of this new state?
	To answer this question, let us recall that the desired separability eigenvector of the operator under study has perpendicular components for the subsystems.
	Thus, in analogy to Eq. \eqref{Eq:ExampleRunGamma}, we may compute the scalar product of the states of the subsystems, which yields $\gamma'=\gamma/\Gamma$ and
	\begin{align}
		|\gamma'|^2=|\langle a_1'|a_2'\rangle|^2=\frac{|\gamma|^2}{9-8|\gamma|^2}<|\gamma|^2=|\langle a_1|a_2\rangle|^2.
	\end{align}
	This means the states $|a_1'\rangle$ and $|a_2'\rangle$ are closer to orthogonal than the initial states $|a_1\rangle$ and $|a_2\rangle$.
	Equivalently, we can say that the expectation value of $\hat L$ increases, $\langle a_1',a_2'|\hat L|a_1',a_2'\rangle=2-|\gamma'|^2>\langle a_1,a_2|\hat L|a_1,a_2\rangle=2-|\gamma|^2$.
	Now, we can perform the next cycle, which results in $|\gamma|^2>|\gamma'|^2>|\gamma''|^2$.
	In fact, performing $s$ steps of the SPI, we get vectors $|a_1^{(s)},a_2^{(s)}\rangle$ for which
	\begin{align}
		\left|\langle a_1^{(s)}|a_2^{(s)}\rangle\right|^2
		<\frac{|\gamma|^2}{\left(9-8|\gamma|^2\right)^s}
		\stackrel{s\to\infty}{\longrightarrow}0
	\end{align}
	holds.
	Therefore, we get a convergent sequence of separability eigenvectors which, in the limit of infinite iterations, yields the desired exact maximal separability eigenvalue, $\langle a_1^{(s)},a_2^{(s)}|\hat L|a_1^{(s)},a_2^{(s)}\rangle\to2-0=g_{\max}$ for $s\to\infty$.

	In conclusion of this example resulting in an entanglement test based on the swap operator [cf. Eqs. \eqref{Eq:ExampleRunOperator} and \eqref{Eq:EntanglementCondition}], our SPI is constructed to deliver the separability eigenvector to the maximal separability eigenvalue.
	Applying properties of the theorem of the cascaded structure of SEEs, we identify the following essential steps: forward and backward iteration.
	The forward iteration allows the reduction of the number of subsystems by one in each recursion depth until the recursion depth reaches a maximum when the operator is a single-partite operator.
	Then the SEE reduces to the EE, and the PI is used to get the eigenvector to the maximal eigenvalue.
	The eigenvector is further used in the next step, the backward iteration, to obtain the remaining subsystem components of the separable product vector.
	After completing multiple instances of such a cycle, we obtain an arbitrarily precise approximation to our sought-after separability eigenvector.

	Now, we may consider the general case beyond the specific example, which was used to demonstrate the general operation of our generally applicable algorithm in Fig. \ref{Fig:Algorithm}.
	This gives the mathematically rigorous formulation of the SPI for arbitrary positive operators $\hat L$ and arbitrary numbers $N$ of subsystems, which necessarily requires a rather technical treatment because of the complexity of the underlying separability problem.
	After this, we perform a benchmarking of our algorithm and apply it to various examples, which provides a more intuitive assessment of our method.

\subsection{Analytic framework}

	Based on the theorem on cascaded structures for the SEEs, the SPI iterates over the number of parties from $N$ to one.
	For $N=1$, the SPI and PI are identical, resembling the underlying fact that the SEE and EE are the same in this case too.
	Beyond the PI, the SPI algorithm includes two main steps, denoted as forward and backward iteration.
	Clearly, the major goal of our maximization algorithm for a positive operator $\hat L$ is to get a new separable state $|a'_1,\ldots,a'_N\rangle$ from the preceding state $|a_1,\ldots,a_N\rangle$, which increases the expectation value, $\langle a'_1,\ldots,a'_N|\hat L|a'_1,\ldots,a'_N\rangle>\langle a_1,\ldots,a_N|\hat L|a_1,\ldots,a_N\rangle$.
	Here, let us discuss the details, the proofs of some of the required theorems are provided in the corresponding appendixes.

\subsubsection{Initial considerations}\label{Sec:InitialThoughts}

	Let us make some more general observations, which we then apply to the separability problem under study.
	A positive operator $\hat L$ induces a scalar product,
	\begin{align}
		\langle x|y\rangle_{\hat L}=\langle x|\hat L|y\rangle,
	\end{align}
	for arbitrary $|x\rangle$ and $|y\rangle$.
	Therefore, the Cauchy-Schwarz inequality holds true, $|\langle x|y\rangle_{\hat L}|^2\leq \langle x|x\rangle_{\hat L}\langle y|y\rangle_{\hat L}$, where the equality is equivalent to $|x\rangle\parallel|y\rangle$.
	Also, we have $\langle x|x\rangle_{\hat L}>0$ for all $|x\rangle\neq0$.
	To apply these features, we have to restrict ourselves to positive operators $\hat L$.
	Note that in our following proofs, we rely on the properties of the scalar product; for example, a positive-semidefinite operator $\hat{L}$ would be insufficient \cite{comment:positivity}.

	Say $\mathcal T$ is a closed and bounded subset of a finite-dimensional Hilbert space.
	For a $|z\rangle\in\mathcal T$, one can define an iterated state as
	\begin{align}
		\label{Eq:ArgMax}
		|z'\rangle=\underset{|y\rangle\in\mathcal T}{\mathrm{arg\,max}\,}|\langle y|z\rangle_{\hat L}|,
	\end{align}
	where we use the function ``$\mathrm{arg\,max}$,'' which returns the argument for which the maximum is reached.
	In other words, $|z'\rangle=\mathrm{arg\,max}_{|t\rangle\in\mathcal T}\, |\langle t|z\rangle_{\hat L} |$ if $|z'\rangle\in\mathcal T$ satisfies the relation $|\langle z'|z\rangle_{\hat L} |=\max_{|t\rangle\in\mathcal T} |\langle t|z\rangle_{\hat L}|$.
	Since $|z\rangle$ is also an element of the set $\mathcal T$ over which we maximize, we can conclude that $\langle z|z\rangle_{\hat L}\leq|\langle z'|z\rangle_{\hat L}|$.
	Applying the Cauchy-Schwarz inequality, we get
	\begin{align}
		\langle z|z\rangle_{\hat L}^2
		\leq
		|\langle z'|z\rangle_{\hat L}|^2
		\leq
		\langle z'|z'\rangle_{\hat L}
		\langle z|z\rangle_{\hat L}
		\leq
		\langle z'|z'\rangle_{\hat L}
		|\langle z'|z\rangle_{\hat L}|.
	\end{align}
	Considering the second and fourth terms, as well as the first and third terms, we find the increasing sequence 
	\begin{align}
		\langle z|z\rangle_{\hat L}\leq |\langle z'|z\rangle_{\hat L}|\leq \langle z'|z'\rangle_{\hat L}.
	\end{align}
	From the definition of $|z'\rangle$ and the properties of the Cauchy-Schwarz inequality, we can also conclude that the equality holds true if and only if $|z'\rangle\parallel|z\rangle$.

	Therefore, we can state that the iteration $|z\rangle$, $|z'\rangle$, $|z''\rangle$, etc. produces a sequence of increasing expectation values, $\langle z|\hat L|z\rangle\leq\langle z'|\hat L|z'\rangle\leq\langle z''|\hat L|z''\rangle\leq\cdots$.
	However, the elusive $\mathrm{arg\,max}$ function \eqref{Eq:ArgMax} has to be computed for this purpose.
	In fact, this can be done for separable states, $\mathcal T=\mathcal S$.

	We may use the abbreviation $|\Psi\rangle=\hat L|a_1,\ldots,a_N\rangle$.
	To maximize the projections of this state onto separable ones, we can apply the theorem of the cascaded structure.
	This means that the maximal projection of this state onto separable states is obtained by $| a'_1,\ldots,a'_N\rangle$ for the maximal separability eigenvalue of $|\Psi\rangle\langle\Psi|$.
	In Sec. \ref{subsection:SeparabilityEigenvalueEquations} and in the flowchart of the SPI in Fig. \ref{Fig:Algorithm}, we describe how this is achieved:
	We reduce the number of parties $N$ and solve the SEE for $\hat L'=\mathrm{tr}_N|\Psi\rangle\langle\Psi|$ (forward iteration, \textcircled{\scriptsize 6}) to get $|a'_1,\ldots,a'_{N-1}\rangle$ (step \textcircled{\scriptsize 7}), which then determines the remaining component $|a'_N\rangle$ from $\langle a'_1,\ldots,a'_{N-1},\,\cdot\,|\Psi\rangle$ (backward iteration, \textcircled{\scriptsize 8}).

	In summary, the cascaded structure describes how to compute the desired $\mathrm{arg\,max}$ function for separable states.
	This describes the underlying principle of the SPI, which allows us to compute the bounds $g_{\max}$ for the necessary and sufficient entanglement criteria \eqref{Eq:EntanglementCondition}.

\subsubsection{The SPI}

	To apply the general relations above, let us begin with the forward iteration step.
	By the following Theorem \ref{Th:Forward}, it is guaranteed that finding the separability eigenvector corresponds to determining the maximal separability eigenvalue for the $(N-1)$-partite case.
	More specifically, it enables us to reduce the number of subsystems for the SEE by one.
	\begin{theorem}[Forward iteration]
		Let $|a_1,\ldots,a_N\rangle$ be the separability eigenvector corresponding to the maximal separability eigenvalue of a positive $N$-partite operator $\hat{L}$.
		Furthermore, let $|\Psi\rangle=\hat{L}|a_1,\ldots,a_N\rangle$.
		For the $(N-1)$-partite operator $\hat{L}'=\mathrm{tr}_N(|\Psi\rangle\langle\Psi|)$, the equality
		\begin{align}
			\label{Eq:THM1}
		\begin{aligned}
			&\langle a_1,\ldots,a_N|\hat{L}|a_1,\ldots,a_N\rangle
			\\
			=&\sqrt{\langle a_1,\ldots,a_{N-1} |\hat{L}'|a_1,\ldots,a_{N-1}\rangle}
		\end{aligned}
		\end{align}
		holds true.
		See Appendix \ref{App:Forward} for the proof.
		\label{Th:Forward}
	\end{theorem}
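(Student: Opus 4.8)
The plan is to compute the right-hand side of Eq.~\eqref{Eq:THM1} explicitly and recognize it as the square of the left-hand side. Abbreviating $g=\langle a_1,\ldots,a_N|\hat L|a_1,\ldots,a_N\rangle$, the starting point is the SEE~\eqref{Eq:SEE2} written as $|\Psi\rangle=\hat L|a_1,\ldots,a_N\rangle=g|a_1,\ldots,a_N\rangle+|\chi\rangle$, where $|\chi\rangle$ is $N$-orthogonal to the separability eigenvector. The guiding idea is that the partial trace defining $\hat L'$ pairs naturally with the $j=N$ instance of this orthogonality.

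First I would write the partial trace in an orthonormal basis $\{|e_k\rangle\}$ of $\mathcal H_N$, so that $\hat L'=\sum_k|\psi_k\rangle\langle\psi_k|$ with $|\psi_k\rangle=(\hat 1\otimes\langle e_k|)|\Psi\rangle$ a vector of the $(N-1)$-partite space. Evaluating the expectation value then gives $\langle a_1,\ldots,a_{N-1}|\hat L'|a_1,\ldots,a_{N-1}\rangle=\sum_k|\langle a_1,\ldots,a_{N-1},e_k|\Psi\rangle|^2$, which reduces the problem to computing the amplitudes $\langle a_1,\ldots,a_{N-1},e_k|\Psi\rangle$.

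The decisive step is to insert the SEE into these amplitudes. Substituting $|\Psi\rangle=g|a_1,\ldots,a_N\rangle+|\chi\rangle$, the $|\chi\rangle$ term drops out because the $N$-orthogonality condition at index $j=N$ states exactly that $\langle a_1,\ldots,a_{N-1},x|\chi\rangle=0$ for every $|x\rangle\in\mathcal H_N$, in particular for $|x\rangle=|e_k\rangle$. Using the normalizations $\langle a_j|a_j\rangle=1$, the surviving term yields $\langle a_1,\ldots,a_{N-1},e_k|\Psi\rangle=g\langle e_k|a_N\rangle$. Summing over $k$ and invoking completeness $\sum_k|\langle e_k|a_N\rangle|^2=\langle a_N|a_N\rangle=1$ gives $\langle a_1,\ldots,a_{N-1}|\hat L'|a_1,\ldots,a_{N-1}\rangle=g^2$. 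Taking the square root, and using that positivity of $\hat L$ forces $g>0$ so that $\sqrt{g^2}=g$, reproduces $g=\langle a_1,\ldots,a_N|\hat L|a_1,\ldots,a_N\rangle$ and closes the argument.

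The proof carries no serious analytic difficulty, since all the nonlinearity of the SEE is concealed in the single vector $|\Psi\rangle$. The one point demanding care is the index bookkeeping: the orthogonality of $|\chi\rangle$ must be applied precisely at the party $j=N$ that is traced out, and the correct branch of the square root must be fixed through positivity of $\hat L$ rather than mere positive-semidefiniteness. I would also note that maximality of the separability eigenvalue is not actually required for the identity itself---the SEE structure alone suffices---though it is the relevant case for the forward-iteration step of the SPI.
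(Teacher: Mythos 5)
Your proof is correct, and it takes a genuinely different route from the paper's. You establish the literal pointwise identity by a direct computation: expanding $\hat L'=\sum_k|\psi_k\rangle\langle\psi_k|$ with $|\psi_k\rangle=(\hat 1\otimes\langle e_k|)|\Psi\rangle$, inserting the second form of the SEE, $|\Psi\rangle=g|a_1,\ldots,a_N\rangle+|\chi\rangle$, and killing the $|\chi\rangle$ contribution with the $j=N$ instance of $N$-orthogonality, so that $\langle a_1,\ldots,a_{N-1}|\hat L'|a_1,\ldots,a_{N-1}\rangle=g^2$. The paper instead argues at the level of the optimization problem: it writes a chain of equalities between maxima over separable states, $\max_{v_N}\langle v_N|\hat L|v_N\rangle=\max_{v_N,v_N'}\langle v_N'|\hat L|v_N\rangle=\cdots=\max_{v_N}\sqrt{\max_{v'_{N-1}}\langle v'_{N-1}|\hat L'|v'_{N-1}\rangle}$, using the Cauchy--Schwarz/monotony argument for the first step and invoking the cascaded-structure theorem for the last. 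Your version buys a cleaner and more self-contained argument---it needs neither the cascaded structure nor, as you correctly observe, the maximality of the eigenvalue, so the identity holds for every separability eigenvector---whereas the paper's version simultaneously establishes the variational content that the forward-iteration step actually exploits, namely that maximizing the $(N-1)$-partite expectation value of $\hat L'$ recovers the $N$-partite optimum. Your remark about fixing the branch of the square root via strict positivity of $\hat L$ (so $g>0$ and $\sqrt{g^2}=g$) is a detail the paper glosses over by its phase convention.
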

	This theorem is a direct consequence of the SEE in Eq. \eqref{Eq:SEE2} and its properties.
	In the SPI algorithm, the theorem is applied in the forward iteration step \textcircled{\scriptsize 6}.
	To find the full $N$-partite separability eigenvector, a reverse step has to be taken.
	Theorem \ref{Th:Backward} states how the $N$ subsystem separability eigenvector can be generated from the $N-1$ subsystem separability eigenvector.
	\begin{theorem}[Backward iteration]
		Consider the same definitions used in Theorem \ref{Th:Forward}.
		If the $(N-1)$-partite separability eigenvector $|a_1,\ldots,a_{N-1}\rangle$ maximizes Eq. \eqref{Eq:THM1}, then the separability eigenvector $|a_1,\ldots,a_{N-1}\rangle\otimes|a_N\rangle$ maximizes $\langle a_1,\ldots,a_{N}|\hat{L}|a_1,\ldots,a_{N}\rangle$ if the condition
		\begin{align}
			\nu|a_N\rangle=\langle a_1,\ldots,a_{N-1},\,\cdot\,|\Psi\rangle
			\label{Eq:THM2}
		\end{align}
		holds true for $\nu\in\mathbb{C}\setminus\{0\}$.
		\label{Th:Backward}
	\end{theorem}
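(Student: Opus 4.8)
The plan is to recognize that the backward iteration is precisely the reconstruction clause of the theorem of cascaded structures, applied to the rank-one operator $\hat{M}=|\Psi\rangle\langle\Psi|$. The quantity being optimized in this step is the separable projection onto $|\Psi\rangle$, since for any separable $|b_1,\ldots,b_N\rangle$ one has $|\langle b_1,\ldots,b_N|\Psi\rangle|^2=\langle b_1,\ldots,b_N|\hat{M}|b_1,\ldots,b_N\rangle$. Thus finding the arg max of this projection over $\mathcal{S}$ is the same as finding the separability eigenvector of $\hat{M}$ to its maximal separability eigenvalue. First I would therefore rewrite the projection as an expectation value of $\hat{M}$ and observe that $\mathrm{tr}_N(\hat{M})=\mathrm{tr}_N(|\Psi\rangle\langle\Psi|)=\hat{L}'$, so that the operator $\hat{L}'$ appearing in Theorem~\ref{Th:Forward} is exactly the partial trace that the cascaded-structure theorem attaches to $\hat{M}$, with the identification $|\xi\rangle=|\Psi\rangle$.

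Second, I would invoke the theorem of cascaded structures directly for $\hat{M}$. Since $\hat{M}$ is rank one, the theorem applies (the requirement that cascading works only for rank-one operators is met here, even though $\hat{L}'$ itself is generically of higher rank). It supplies two facts: the nonzero separability eigenvalues of $\hat{M}$ and $\hat{L}'$ coincide, and the separability eigenvector of $\hat{M}$ has the product form $|a_1,\ldots,a_{N-1}\rangle\otimes|a_N\rangle$, where $|a_1,\ldots,a_{N-1}\rangle$ is the separability eigenvector of $\hat{L}'$ and the last factor obeys the parallelism $|a_N\rangle\parallel\langle a_1,\ldots,a_{N-1},\,\cdot\,|\Psi\rangle$ of Eq.~\eqref{Eq:NthComponent}. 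The hypothesis that $|a_1,\ldots,a_{N-1}\rangle$ maximizes the right-hand side of Eq.~\eqref{Eq:THM1} says exactly that it is the maximal separability eigenvector of $\hat{L}'$, while the condition \eqref{Eq:THM2}, $\nu|a_N\rangle=\langle a_1,\ldots,a_{N-1},\,\cdot\,|\Psi\rangle$ with $\nu\neq0$, is nothing but that parallelism together with a fixed normalization. Hence the constructed product state is the separability eigenvector of $\hat{M}$ to its maximal separability eigenvalue, i.e.\ the arg max of $|\langle\,\cdot\,|\Psi\rangle|$ over $\mathcal{S}$.

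Third, I would close the loop back to $\hat{L}$ through the arg-max framework of the initial considerations. Setting $|z\rangle=|a_1,\ldots,a_N\rangle$ and $|z'\rangle=|a_1,\ldots,a_{N-1}\rangle\otimes|a_N\rangle$, the state just built is exactly $|z'\rangle$ from Eq.~\eqref{Eq:ArgMax}, because $\langle y|z\rangle_{\hat{L}}=\langle y|\Psi\rangle$. The Cauchy--Schwarz chain established there then gives $\langle z|\hat{L}|z\rangle\leq\langle z'|\hat{L}|z'\rangle$, with equality if and only if $|z'\rangle\parallel|z\rangle$. So the reconstructed product state never decreases the expectation value of $\hat{L}$, and the fixed points of the forward--backward cycle, characterized by $|z'\rangle\parallel|z\rangle$, are precisely the stationary separability eigenvectors; at the maximal such fixed point the constructed state indeed maximizes $\langle a_1,\ldots,a_N|\hat{L}|a_1,\ldots,a_N\rangle$, which together with Theorem~\ref{Th:Forward} certifies the value through $\hat{L}'$.

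The main obstacle I anticipate is twofold. The more routine part is the bookkeeping of normalization and nondegeneracy: one must guarantee $\nu\neq0$, which amounts to the maximal separability eigenvalue of $\hat{L}'$ (equivalently of $\hat{M}$) being nonzero, so that $|\Psi\rangle$ has a nonvanishing separable projection; this is where positivity of $\hat{L}$, ensuring $|\Psi\rangle=\hat{L}|z\rangle\neq0$, enters, and the residual phase freedom in $|a_N\rangle$ must be absorbed into $\nu$. The more delicate, conceptual obstacle is pinning down the precise sense in which the product state ``maximizes $\langle\,\cdot\,|\hat{L}|\,\cdot\,\rangle$'': the cascaded-structure theorem only certifies maximality of the projection $|\langle\,\cdot\,|\Psi\rangle|$, and upgrading this to a statement about the expectation value of $\hat{L}$ requires the Cauchy--Schwarz monotonicity step and the identification of the fixed point, which is the genuine link between a single backward step and the global maximization that the SPI targets.
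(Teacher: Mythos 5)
Your proposal is correct and follows essentially the same route as the paper, which proves Theorem~\ref{Th:Backward} simply by noting that it ``directly follows from the cascaded structure, cf.\ Eq.~\eqref{Eq:NthComponent}'' applied to the rank-one operator $|\Psi\rangle\langle\Psi|$. Your additional third step---upgrading maximality of the projection onto $|\Psi\rangle$ to monotonicity of $\langle\,\cdot\,|\hat{L}|\,\cdot\,\rangle$ via Cauchy--Schwarz---is sound but is material the paper places separately in the initial considerations and in Theorem~\ref{Th:Monotony} rather than inside this proof.
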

	The proof of this theorem directly follows from the cascaded structure, cf. Eq. \eqref{Eq:NthComponent}.
	It relates the separability eigenvector for $N$ parties to those of a lower number of parties, $N-1$.
	Thereby, if a solution to the $(N-1)$-partite SEEs for $\hat L'$ is known, we directly find the $N$th component of the full solution $|a_1,\ldots,a_N\rangle$.
	In the flowchart in Fig. \ref{Fig:Algorithm}, we see the application in the backward iteration step \textcircled{\scriptsize 8}.

	The combination of Theorems \ref{Th:Forward} and \ref{Th:Backward} is fundamental for the SPI to work.
	In fact, one might visualize the working principle of the algorithm as a nested cascading structure.
	The forward iteration is recursively applied until we reach the case $N=1$.
	In that case, the standard PI is performed.
	After that, the backward iteration finalizes the individual recursion layers of the SPI until we obtain the new $N$-partite separable vector.
	Then, we can start a new cycle of forward iterations, the PI, and backward iterations until the convergence is reached.
	The algorithm will terminate successfully and return the complete separability eigenvector corresponding to the maximal separability eigenvalue $g_{\max}$ for detecting entanglement in terms of inequality \eqref{Eq:EntanglementCondition}.

	To verify the statement that the algorithm converges to the maximal separability eigenvalue, a few observations have to be shown first.
	Let us take a closer look at the sequence of product vectors created by the SPI.
	In every step, we find an element of all product states, $|a'_1,\ldots,a'_N\rangle\in\mathcal S$, which projects maximally onto the action of operator $\hat{L}$ onto the previously generated product state $|a_1,\ldots,a_N\rangle$.
	This iteration is done until we reach convergence.
	This generates a monotonously growing sequence of expectation values of $\hat{L}$, which is stated in the following theorem.
	\begin{theorem}[Monotony]
		Let $|a_1,\ldots,a_N\rangle\in\mathcal S$ and $|a'_1,\ldots,a'_N\rangle\in\mathcal{S}$ such that
	        \begin{align}
	                |a'_1,\ldots,a'_N\rangle=\underset{|b_1,\ldots,b_N\rangle\in\mathcal S}{\mathrm{arg\,max}}\langle b_1,\ldots,b_N|\hat{L}|a_1,\ldots,a_N\rangle.
	        \end{align}
		Then the inequality
		\begin{align}
			\label{Eq:Monotony}
		\begin{aligned}
			&\langle a_1,\ldots,a_N|\hat{L}|a_1,\ldots,a_N\rangle
			\\
			\leq&
			\langle a'_1,\ldots,a'_N|\hat{L}|a_1,\ldots,a_N\rangle
			\\
			\leq&
			\langle a'_1,\ldots,a'_N|\hat{L}|a'_1,\ldots,a'_N\rangle
		\end{aligned}
		\end{align}
		holds true.
		Furthermore, equality in Eq. \eqref{Eq:Monotony} holds true iff $|a'_1,\ldots,a'_N\rangle=| a_1,\ldots,a_N\rangle$.
		See Appendix \ref{App:Monotony} for the proof.
		\label{Th:Monotony}
	\end{theorem}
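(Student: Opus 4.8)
The plan is to recognize Theorem~\ref{Th:Monotony} as the specialization of the abstract iteration bound derived in Sec.~\ref{Sec:InitialThoughts} to the feasible set $\mathcal{T}=\mathcal{S}$, under the identifications $|a\rangle\equiv|a_1,\ldots,a_N\rangle$ and $|a'\rangle\equiv|a_1',\ldots,a_N'\rangle$. First I would set up the induced scalar product $\langle x|y\rangle_{\hat L}=\langle x|\hat L|y\rangle$, which is a genuine inner product precisely because $\hat L=\hat L^\dagger>0$ [Eq.~\eqref{Eq:PositiveOperator}]; in particular $\langle a|a\rangle_{\hat L}>0$ since $|a\rangle\neq0$. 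Each factor $|b_j\rangle$ may absorb a global phase while remaining a unit vector, so the objective can be taken real and nonnegative at the maximizer, and the $\mathrm{arg\,max}$ defining $|a'\rangle$ coincides with maximizing $|\langle b_1,\ldots,b_N|a\rangle_{\hat L}|$ over $\mathcal{S}$. Since $\mathcal{S}$ is the continuous image of a product of unit spheres and hence compact, this maximum is attained and the $\mathrm{arg\,max}$ is well defined.

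The two inequalities then follow by the same short argument used in Sec.~\ref{Sec:InitialThoughts}. For the first line of Eq.~\eqref{Eq:Monotony}, I would observe that $|a\rangle$ is itself an admissible point of the maximization, whence $\langle a|a\rangle_{\hat L}\le|\langle a'|a\rangle_{\hat L}|=\langle a'|\hat L|a\rangle$. For the second line, I would invoke Cauchy–Schwarz in the induced product, $|\langle a'|a\rangle_{\hat L}|^2\le\langle a'|a'\rangle_{\hat L}\,\langle a|a\rangle_{\hat L}$, and combine it with the first inequality: substituting $\langle a|a\rangle_{\hat L}\le|\langle a'|a\rangle_{\hat L}|$ into the right-hand side and dividing through by the (strictly positive) common factor yields $|\langle a'|a\rangle_{\hat L}|\le\langle a'|a'\rangle_{\hat L}$, which is the remaining line of Eq.~\eqref{Eq:Monotony}.

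For the equality statement I would trace the conditions back through this chain. Equality throughout forces equality in Cauchy–Schwarz, which for a strict inner product is equivalent to $|a'\rangle\parallel|a\rangle$. I would then close the argument using the phase normalization built into the $\mathrm{arg\,max}$: writing $|a'\rangle=e^{i\phi}|a\rangle$ gives $\langle a'|a\rangle_{\hat L}=e^{-i\phi}\langle a|a\rangle_{\hat L}$, and requiring this to be real and nonnegative while $\langle a|a\rangle_{\hat L}>0$ forces $e^{i\phi}=1$, i.e.\ $|a'\rangle=|a\rangle$. The main obstacle I anticipate is exactly this last step: separable vectors admit non-unique product decompositions, so the equality must be argued at the level of the full $N$-partite state vectors rather than factor by factor, and the residual relative phase left open by Cauchy–Schwarz equality alone must be eliminated by the phase-fixing convention of the maximization.
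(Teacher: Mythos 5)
Your proposal is correct and follows essentially the same route as the paper's proof: the first inequality from admissibility of the original vector in the $\mathrm{arg\,max}$, and the second from the Cauchy--Schwarz inequality in the $\hat L$-induced scalar product, with the same substitute-and-divide manipulation of the resulting chain. Your treatment of the equality case is in fact slightly more careful than the paper's (which stops at $|a'_1,\ldots,a'_N\rangle\parallel|a_1,\ldots,a_N\rangle$), since you explicitly eliminate the residual global phase via the sign convention of the maximization.
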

	This theorem is a special case of the general considerations made in Sec. \ref{Sec:InitialThoughts}.
	In addition, the global phase of the separable state $|a'_1,\ldots,a'_N\rangle$ can be chosen freely, which we conveniently select such that we have positive projections onto $|\Psi\rangle$, i.e., $\langle a'_1,\ldots,a'_N|\hat{L}|a_1,\ldots,a_N\rangle=|\langle a'_1,\ldots,a'_N|\hat{L}|a_1,\ldots,a_N\rangle|$.
	Let us stress that the $\mathrm{arg\,max}$ function, i.e., finding the maximal projection onto $|\Psi\rangle=\hat L|a_1,\ldots,a_N\rangle$, is obtained from the cascaded structure; see also Theorems \ref{Th:Forward} and \ref{Th:Backward}.

	Because of Theorem \ref{Th:Monotony}, the SPI produces a sequence of increasing expectation values of $\hat L$.
	This observation is an important aspect for the proof of convergence of the SPI, which is shown in two parts.
	Both theorems rely on the sequence $(g^{(s)})_s$ of expectation values generated by the SPI in each step $s$, where
	\begin{align}
		\label{Eq:SEValueSequence}
		g^{(s)}=\langle a_1^{(s)},\ldots,a_N^{(s)}|\hat{L}|a_1^{(s)},\ldots,a_N^{(s)}\rangle.
	\end{align}
	Here, in analogy to the example in Sec. \ref{Sec:Concept}, the vector $|a_1^{(s)},\ldots,a_N^{(s)}\rangle$ is the approximation to the separability eigenvector for the maximal separability eigenvalue after $s$ iterations of the SPI.
	First, we consider the local convergence of the algorithm.
	\begin{theorem}[Local convergence]
	        For any starting vector, the sequence $(g^{(s)})_s$ of expectation values generated by the SPI converges, i.e., the limit
	        \begin{align}
			\label{Eq:LocalConvergence}
	                \lim_{s\to\infty} g^{(s)} = \bar{g}
	        \end{align}
		exists and is bounded as $0\leq\bar{g}\leq g_{\mathrm{max}}$.
		See Appendix \ref{App:Local} for the proof.
	        \label{Th:Local}
	\end{theorem}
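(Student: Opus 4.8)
The plan is to recognize Eq.~\eqref{Eq:LocalConvergence} as a direct instance of the monotone convergence theorem for real sequences: a sequence of reals that is both monotone and bounded necessarily converges. The entire argument therefore reduces to establishing two facts about the sequence $(g^{(s)})_s$ defined in Eq.~\eqref{Eq:SEValueSequence}---that it is monotonically non-decreasing, and that it is confined to the interval $[0,g_{\max}]$---after which the existence of the limit $\bar g$ together with the bound $0\leq\bar g\leq g_{\max}$ follows immediately.

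First I would obtain monotonicity directly from Theorem~\ref{Th:Monotony}. By construction of the SPI, the state $|a_1^{(s+1)},\ldots,a_N^{(s+1)}\rangle$ produced at step $s+1$ is precisely the $\mathrm{arg\,max}$ over $\mathcal S$ of $\langle b_1,\ldots,b_N|\hat L|a_1^{(s)},\ldots,a_N^{(s)}\rangle$, which is exactly the hypothesis of Theorem~\ref{Th:Monotony} with the current state $|a_1^{(s)},\ldots,a_N^{(s)}\rangle$ playing the role of $|a_1,\ldots,a_N\rangle$ and the next state playing the role of $|a'_1,\ldots,a'_N\rangle$. The theorem then yields $g^{(s)}\leq g^{(s+1)}$, so $(g^{(s)})_s$ is non-decreasing. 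At this point I would note that the $\mathrm{arg\,max}$, and hence the sequence itself, is well defined because $\mathcal S$ is a product of unit spheres in the finite-dimensional Hilbert spaces $\mathcal H_j$ and is therefore compact, so the continuous function being maximized attains its maximum---the same closed-and-bounded setting already invoked in Sec.~\ref{Sec:InitialThoughts}.

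Next I would establish the bounds. Each $g^{(s)}$ is, by Eq.~\eqref{Eq:SEValueSequence}, the expectation value of $\hat L$ in the normalized separable state $|a_1^{(s)},\ldots,a_N^{(s)}\rangle\in\mathcal S$. Since $g_{\max}$ is by definition the maximum of $\langle a_1,\ldots,a_N|\hat L|a_1,\ldots,a_N\rangle$ over all normalized separable states, we obtain $g^{(s)}\leq g_{\max}$ for every $s$. The lower bound $g^{(s)}\geq 0$ follows from positivity $\hat L>0$ [Eq.~\eqref{Eq:PositiveOperator}], which forces every such expectation value to be non-negative.

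Combining these observations, $(g^{(s)})_s$ is a non-decreasing sequence contained in $[0,g_{\max}]$, so the monotone convergence theorem gives existence of $\bar g=\lim_{s\to\infty}g^{(s)}$, and since the bounds hold termwise they pass to the limit, yielding $0\leq\bar g\leq g_{\max}$. I expect no genuine obstacle in this argument, as it is essentially an assembly of Theorem~\ref{Th:Monotony} with the definition of $g_{\max}$. The single point worth emphasizing---and the reason the result is only \emph{local} convergence---is that this reasoning does \emph{not} force $\bar g=g_{\max}$: the limit could in principle lie strictly below the global maximum, corresponding to the iteration stalling at a non-optimal fixed point. Ruling this out, i.e.\ upgrading local to global convergence, is the substantive remaining task and would require separately analyzing the fixed points of the SPI map rather than merely the monotone boundedness exploited here.
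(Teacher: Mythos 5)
Your proposal is correct and follows essentially the same route as the paper's proof in Appendix~\ref{App:Local}: monotonicity from Theorem~\ref{Th:Monotony}, boundedness of the sequence, and the monotone convergence theorem. Your version is in fact slightly more careful than the paper's, since you bound the sequence explicitly by $g_{\max}$ (which is what the stated conclusion $\bar g\leq g_{\max}$ actually requires) rather than merely invoking boundedness of $\hat L$, and you note why the $\mathrm{arg\,max}$ is attained.
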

	For an arbitrary starting vector, a sequence of expectation values of $\hat{L}$ for separable states is generated.
	The generated sequence converges independent of the choice of starting vector.
	Combining the statements from Theorems \ref{Th:Monotony} and \ref{Th:Local}, we conclude that there is a monotone growth of expectation values towards a maximum.
	This maximum does not necessarily need to be the maximal separability eigenvalue $g_{\mathrm{max}}$ as shown in Theorem \ref{Th:Local}.
	We therefore require an additional observation to prove global convergence of the SPI, which is stated in Theorem \ref{Th:Global}.
	\begin{theorem}[Global convergence]
		Let $\Sigma$ be a set of separable starting vectors and $(g_{\Phi}^{(s)})_s$ be sequences of expectation values generated by the SPI for a starting vector $|{\Phi}\rangle\in\Sigma$.
		Furthermore, say $\{\bar{g}_{\Phi}\in\mathbb{R}:|{\Phi}\rangle\in\Sigma \text{ and } \bar{g}_{\Phi}=\lim_{s\to\infty} g_{\Phi}^{(s)}\}$ defines the set of optimal expectation values (limits of the converged sequences) for each starting vector.
		The maximal separability eigenvalue for the operator $\hat{L}$ is $g_\mathrm{max}=\mathrm{max}_{\Phi\in\Sigma}\{\bar{g}_{\Phi}\}$, which is the maximum of the limit to the series of expectation values for each starting vector.
		See Appendix \ref{App:Global} for the proof.
		\label{Th:Global}
	\end{theorem}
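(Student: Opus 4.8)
The plan is to prove the claimed identity $g_{\max}=\max_{\Phi\in\Sigma}\{\bar{g}_\Phi\}$ by establishing the two inequalities separately. The direction $\max_{\Phi\in\Sigma}\{\bar{g}_\Phi\}\leq g_{\max}$ is immediate: Theorem \ref{Th:Local} already guarantees that every limit satisfies $\bar{g}_\Phi\leq g_{\max}$, so the maximum over all starting vectors cannot exceed $g_{\max}$ either. The actual content lies in the reverse direction, $\max_{\Phi\in\Sigma}\{\bar{g}_\Phi\}\geq g_{\max}$, for which I would exhibit a single starting vector whose SPI sequence converges to $g_{\max}$.

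To this end I would first secure the existence of a global maximizer. Since entanglement can be detected in finite-dimensional subspaces (Sec. \ref{subsec:MultipartiteEntanglement}), the set $\mathcal{S}$ of normalized separable vectors is a closed and bounded subset of a finite-dimensional space, being the image of a product of unit spheres under the continuous tensor map, and $|b_1,\ldots,b_N\rangle\mapsto\langle b_1,\ldots,b_N|\hat{L}|b_1,\ldots,b_N\rangle$ is continuous. Hence this expectation value attains its maximum on $\mathcal{S}$ at some vector $|a_1^\star,\ldots,a_N^\star\rangle$, and by the characterization recalled in Sec. \ref{subsection:SeparabilityEigenvalueEquations} this maximal value is precisely $g_{\max}$, with $|a_1^\star,\ldots,a_N^\star\rangle$ the associated separability eigenvector.

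The key step is then to show that $|a_1^\star,\ldots,a_N^\star\rangle$ is a fixed point of the SPI map, so that using it as a starting vector produces the constant sequence $g^{(s)}=g_{\max}$. I would invoke Theorem \ref{Th:Monotony}: applying the $\mathrm{arg\,max}$ step to $|a_1^\star,\ldots,a_N^\star\rangle$ yields a separable vector $|a_1',\ldots,a_N'\rangle$ with $\langle a_1^\star,\ldots,a_N^\star|\hat{L}|a_1^\star,\ldots,a_N^\star\rangle\leq\langle a_1',\ldots,a_N'|\hat{L}|a_1',\ldots,a_N'\rangle$. But the left-hand side already equals the global maximum $g_{\max}$ over $\mathcal{S}$, and $|a_1',\ldots,a_N'\rangle\in\mathcal{S}$ cannot exceed it, so the monotony chain \eqref{Eq:Monotony} collapses to equalities. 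The equality condition of Theorem \ref{Th:Monotony} then forces $|a_1',\ldots,a_N'\rangle=|a_1^\star,\ldots,a_N^\star\rangle$, i.e. the maximizer is mapped to itself. Consequently the SPI leaves this vector invariant, the sequence of expectation values is constantly $g_{\max}$, and $\bar{g}_\Phi=g_{\max}$ for this start.

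Combining the two bounds gives the assertion, provided the maximizer is available as a starting vector, i.e. $|a_1^\star,\ldots,a_N^\star\rangle\in\Sigma$, which is ensured by taking $\Sigma=\mathcal{S}$. I expect the main obstacle to be exactly this quantifier issue over $\Sigma$: local convergence (Theorem \ref{Th:Local}) only certifies convergence to \emph{some} value bounded by $g_{\max}$, and a given start may land in the basin of a suboptimal fixed point, so the substance of global convergence is that the supremum over starts is attained at $g_{\max}$. Making this rigorous hinges on the fixed-point argument above together with the guarantee that $\Sigma$ reaches the global optimizer---reflecting, in the numerical implementation, the practical need to explore sufficiently many (ideally all) separable starting vectors.
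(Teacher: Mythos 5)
Your proof is correct as far as it goes, but it takes a genuinely different route from the paper's. The paper proves Theorem \ref{Th:Global} by induction over the number of parties $N$: the base case is the ordinary PI, and the induction step re-traces the forward iteration, the reduced single-party maximization, and the backward iteration via the cascaded structure; the ``global'' part is then argued by choosing $\Sigma$ to be an operator basis and appealing to the PI-style condition that at least one basis element has a nonzero contribution from the sought-after separability eigenvector. You instead split the identity into two inequalities: the bound $\max_\Phi\bar g_\Phi\leq g_{\max}$ from Theorem \ref{Th:Local}, and the reverse bound by a compactness argument (the maximum of $\langle a_1,\ldots,a_N|\hat L|a_1,\ldots,a_N\rangle$ over the compact set $\mathcal S$ is attained) combined with the equality case of Theorem \ref{Th:Monotony}, which forces the global maximizer to be a fixed point of the SPI map, so that starting there yields the constant sequence $g_{\max}$. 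Your argument is shorter, more elementary, and arguably more rigorous for the statement as literally written---but only under the reading $\Sigma=\mathcal S$ (or any $\Sigma$ containing the maximizer), which you correctly flag. The trade-off is that your proof is non-constructive: the one starting vector you certify to reach $g_{\max}$ is the unknown optimizer itself, so it says nothing about whether the finite, a priori computable operator basis of Sec.~\ref{subsec:StartingVectors} suffices, which is the practically relevant content the paper's proof (heuristically, by analogy with the linear PI) tries to supply. Neither argument fully closes that gap; yours is honest about it, while the paper's overlap argument does not obviously transfer from the linear PI to the nonlinear SPI.
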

	The set $\Sigma$ of different starting vectors $|\Phi\rangle$ that we consider is covered in Sec. \ref{subsec:StartingVectors}.
	Even in the worst-case scenario, it is far smaller than the set $\mathcal S$ of all separable states.

\subsubsection{Starting vectors}\label{subsec:StartingVectors}

	An important aspect for the implementation of the algorithm is the choice of a starting vector, cf. Theorem \ref{Th:Global}.
	Because a proper choice can significantly decrease the runtime of the algorithm, let us provide more details on this aspect.

	Assume we start with a separability eigenvector corresponding to any---except for the largest---separability eigenvalue.
	Then, the algorithm converges immediately, and the resulting separability eigenvalue will not be maximal.
	It is worth mentioning that such a behavior is already well known for the PI.
	It is straightforward to check whether an initial vector is a (separability) eigenvector.
	Similar results might happen for starting vectors that are too close to any (separability) eigenvector.

	To circumvent such problems, the SPI can be run multiple times with different starting vectors, chosen as an operator basis.
	Namely, the set $\{|a_{1,k},\ldots,a_{N,k}\rangle\langle a_{1,k},\ldots,a_{N,k}|\}_{k}$ of starting vectors spans all operators of the underlying Hilbert space.
	This allows us to cover all parts of the operator space and, of course, also resolves the related problem for the PI.

	This choice is valid as the set of separability eigenvectors can be used to find a decomposition of any state, similarly to the spectral decomposition found by regular eigenvectors.
	In fact, any positive-semidefinite operator can be decomposed in terms of projectors of separability eigenvectors; see Refs. \cite{SV09quasi,SW18} for proofs of the bipartite and multipartite cases, respectively.
	Specifically, the decomposition of at least one vector of the operator basis needs to contain the sought-after separability eigenvector.
	This warrants the choice of using the operator basis as starting vectors.

	Another efficient \textit{ad hoc} ansatz that we used for the implementation of the SPI is described as follows:
	First, a preliminary run of the SPI is done to find a product vector projecting maximally onto the vector $\hat{L}|v\rangle$, where $|v\rangle$ is the eigenvector to the maximal standard eigenvalue of $\hat{L}$.
	This choice is inspired by the fact that the wanted vector $|a_1,\ldots,a_N\rangle$ maximizes the expectation value of $\hat{L}$ with respect to product vectors.
	The eigenvector $|v\rangle$ maximizes the expectation value of $\hat{L}$ without the restriction to separable states.
	Second, the product vector that lies maximally parallel to $|v\rangle$ serves as our starting vector.
	Finding such a maximal projection is in fact exactly what we get when running the SPI for a positive operator $\hat 1+|v\rangle\langle v|$.
	Finally, the resulting product vector serves as the initial vector for the SPI algorithm applied to $\hat L$.

	Our numerical results and comparison with other methods confirm the assumption that the constructed starting vector is sufficient, as the described procedure returns the same values.
	Still, a rigorous proof of this observation requires further investigations.
	Until then, the choice of an operator basis of starting vectors is preferable in the general case.

\subsubsection{Convergence criterion}\label{subsec:Convergence}

	The flowchart in Fig. \ref{Fig:Algorithm} requires a check for convergence in step \textcircled{\scriptsize 3}.
	Theorems \ref{Th:Local} and \ref{Th:Global} guarantee, in theory, the convergence of the SPI.
	In a practical implementation of the algorithm, however, the computer needs to know when convergence is reached in a numerical sense.

	We apply a convergence criterion that is based on the SEE.
	In the $s$th cycle of the algorithm, we obtain the vector $|\chi^{(s)}\rangle=(\hat{L}-g^{(s)}\hat 1)|a_1^{(s)},\ldots,a_N^{(s)}\rangle$; see Eq. \eqref{Eq:SEE2}.
	By definition (see Sec. \ref{Sec:Preliminaries}), $|a_1^{(s)},\ldots,a_N^{(s)}\rangle$ is a separability eigenvector if and only if $|\chi^{(s)}\rangle$ is $N$ orthogonal.
	Likewise, convergence is reached if and only if $|\chi^{(s)}\rangle$ is $N$ orthogonal.
	Theorem \ref{Th:Local} guarantees that we approach this scenario---meaning that $\lim_{s\to\infty}\langle a_1^{(s)},\ldots,a_{j-1}^{(s)},x,a_{j+1}^{(s)},\ldots,a_N^{(s)}|\chi^{(s)}\rangle=0$ for all $x\in\mathcal H_j$ and $j=1,\ldots, N$.

	In fact, this $N$-orthogonality requirement can be used to quantify the closeness to the solution.
	For this purpose, we can evaluate if the following inequality is satisfied: $\max_{j=1,\ldots,N}\max_{x\in\mathcal B_j}|\langle a_1^{(s)},\ldots,a_{j-1}^{(s)},x,a_{j+1}^{(s)},\ldots,a_N^{(s)}|\chi^{(s)}\rangle|{<}\epsilon$, for a sufficiently small $\epsilon$ and all $x\in\mathcal{B}_j$, where $\mathcal B_j$ is a basis of $\mathcal H_j$.
	The machine precision of the representation of numbers on a computer bounds the value of $\epsilon$.
	When the inequality is satisfied, the possible numerical convergence is achieved and the current iteration $|a_1^{(s)},\ldots,a_N^{(s)}\rangle$ is the desired approximation to the separability eigenvector.

\section{Benchmark}\label{Sec:Benchmark}

	We now want to find out how the SPI performs as opposed to other methods that allow the construction of arbitrary entanglement witnesses.
	A simple brute-force approach to obtain the entanglement criterion \eqref{Eq:EntanglementCondition} for $\hat{L}$ is to find all separable pure states and calculate the expectation value of $\hat{L}$.
	Then, $g_{\max}$ is the maximum of these values.

	As the search space for these vectors is over a continuum, one could use a generally applicable global optimization algorithm, such as genetic algorithms \cite{HH04}.
	This presents a state-of-the-art method to solve optimization problems.
	It is rather fast and inspired by evolutionary processes in biology.
	Thus, we implemented such a genetic algorithm to evaluate the performance of the SPI.
	A genetic algorithm requires a fitness function to be minimized, which will be $f(v)=-\langle v|\hat{L}|v\rangle$, the negative of the expectation value of $\hat{L}$.
	An intermediate step ensures that the argument vector $|v\rangle$ is indeed a product vector.
	During the runtime, the genetic algorithm will minimize $f(v)$ and converge towards a vector $|v_{0}\rangle$ with $f(v_{0})=\min_vf(v)$.
	The resulting minimization will give the maximal separability eigenvalue $g_\mathrm{max}=-f(v_0)$, or at least a close approximation.

	To show the advantages of the proposed algorithm, SPI, as opposed to this simple maximization strategy, we compare the two approaches for the following, different scenarios:
	We consider a bipartite system ($N=2$) and vary the dimensions, $d_1=d_2=d$; we fix the dimensions (here, $d_1=\cdots=d_N=2$) and increase the number of parties $N$.
	As discussed previously, we choose the convergence criterion in Sec. \ref{subsec:Convergence}.
	The starting vector is chosen as the maximal separable projection on the (standard) eigenvector corresponding to the maximal eigenvalue of $\hat{L}$.

	To exclude any bias, the chosen operators are randomly generated by first defining a random operator $\hat{M}$ acting on the $D$-dimensional space, where $D=d_1\cdot\ldots\cdot d_N$.
	Then, we construct a positive and normalized operator $\hat{L}$ for which we want to find the maximal separability eigenvalue as
	\begin{align}
		\hat{L}=\frac{1}{\mathrm{tr}(\hat{1}+\hat{M}\hat{M}^\dagger)}\left(\hat{1}+\hat{M}\hat{M}^\dagger\right).
	        \label{Eq:Random}
	\end{align}
	The SPI and brute-force approaches have been tested for 100 randomly selected operators.
	To make the runtimes comparable, the same set of random operators was used for both approaches.

\begin{figure}
	\includegraphics[width=.48\textwidth]{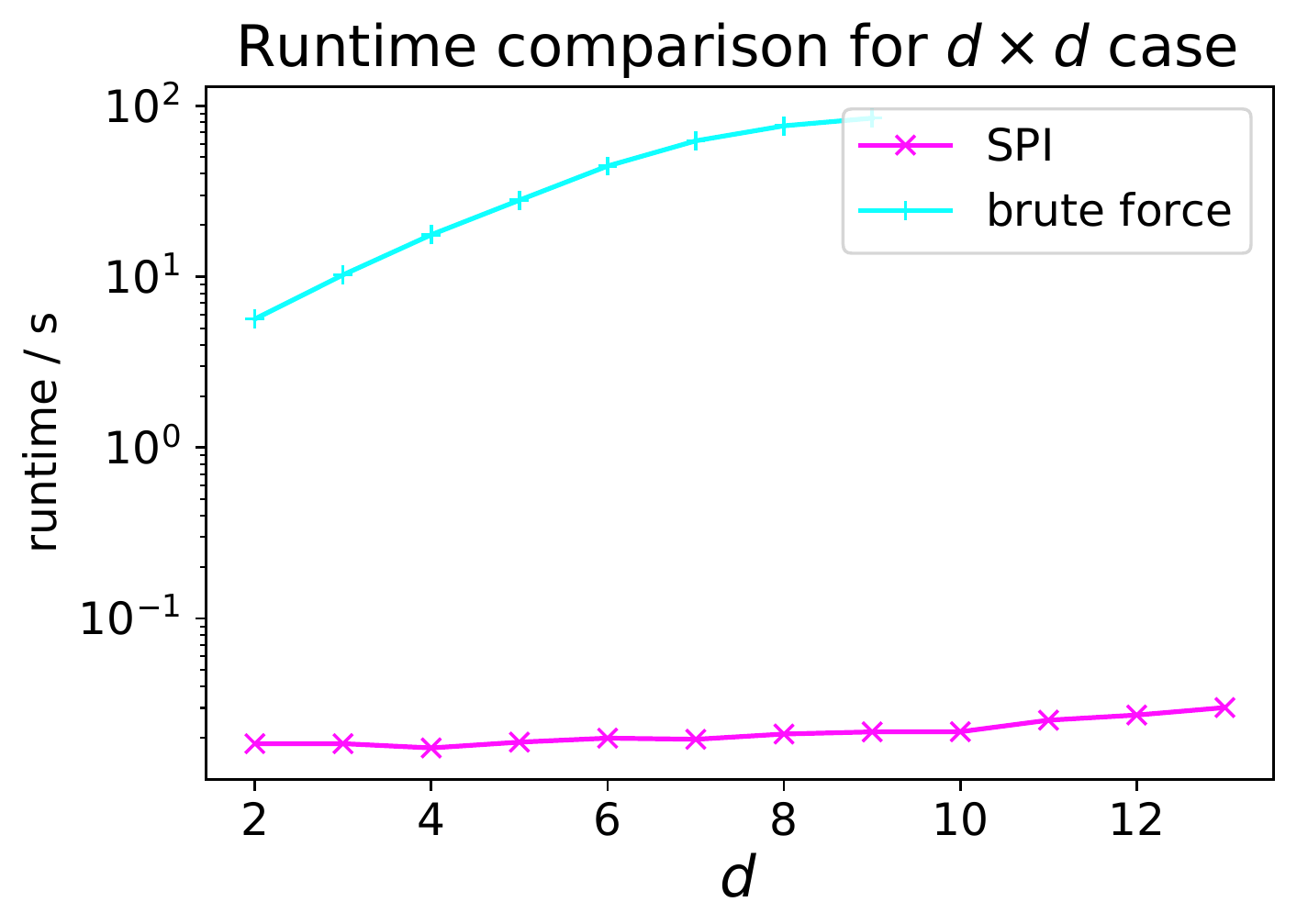}\\
	\includegraphics[width=.48\textwidth]{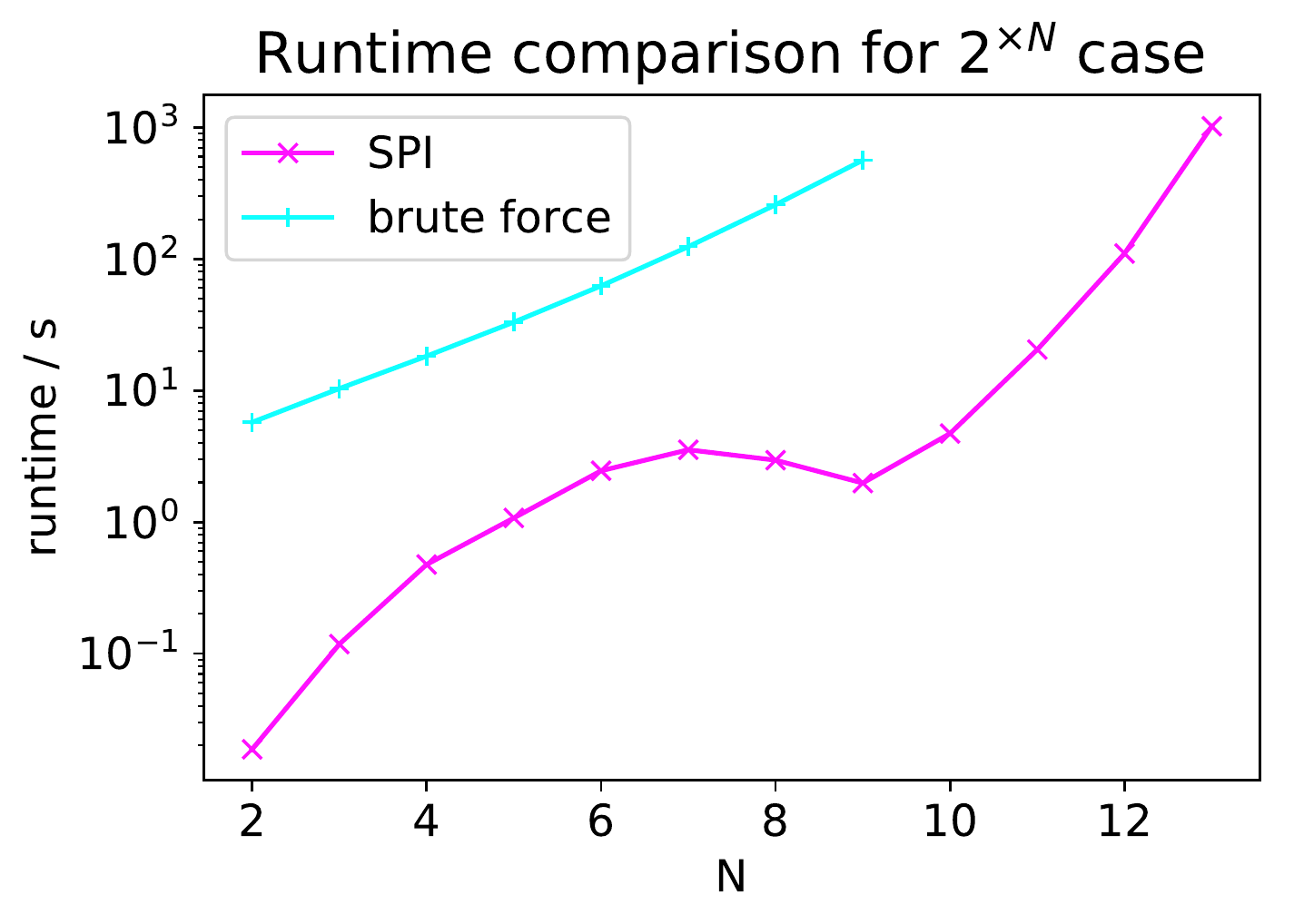}
	\caption{
		Benchmark results comparing the SPI (magenta line) to a brute-force approach, which is a genetic algorithm (cyan line).
		Top panel: The results for the runtime comparison between both approaches are shown when scaling the dimensions of each subsystem for a bipartite state, $N=2$ and $d_1=d_2=d$.
		Bottom panel: The corresponding results are shown when scaling the number $N$ of parties, which are qubit systems ($d_1=\cdots=d_N=2$).
		It can be seen that the SPI performs better than the competing approach by several orders of magnitude.
	}\label{Fig:Benchmark}
\end{figure}
	
	Figure \ref{Fig:Benchmark} shows the average runtime for the SPI compared to the brute-force approach.
	These results come from running both algorithms on a desktop computer.
	The runtime of the SPI is, on average, at least two orders of magnitude lower for the considered sample size of 100 randomly generated test operators.
	In bipartite systems (Fig. \ref{Fig:Benchmark}, top panel), we see a smaller scaling behavior of the SPI, whereas the scaling is about the same for an increasing number of qubits (bottom plot).
	Moreover, focusing on the numbers of subsystems (Fig. \ref{Fig:Benchmark}, bottom panel), we see that the SPI finds the maximal separability eigenvalue for a state acting on a 13-fold Hilbert space (dimensionality $D=2^{13}=8\,192$) in roughly the same time as the other approach manages to find in the ninefold case (dimensionality $D=2^9=512$).
	It is also worth mentioning that all curves of the presented study in Fig. \ref{Fig:Benchmark} can be roughly approximated by exponential functions of the overall dimensionality ($D=d^2$ [top] and $D=2^N$ [bottom]), representing the expected exponentially increasing runtime of the separability problem.
	The dip (in favor of the SPI) at $N=9$ in the bottom plot cannot be explained at this point and requires further investigations.

	Our benchmark indicates the superior potential of the SPI algorithm to numerically construct entanglement tests.
	Specifically, it outperforms the competing approach for high-dimensional scenarios, which includes the dimensionality of the individual parties as well as the number of parties itself.
	This enables a comparably efficient tool for the identification of entanglement in complex physical systems.
	Keep in mind that the runtimes shown in Fig. \ref{Fig:Benchmark} are from running the SPI on a desktop computer; computation clusters might improve the performance even further by a large margin.

	Beyond the genetic algorithm, there exist more specialized algorithms, treating Eq. \eqref{Eq:gMax} as a maximization of a multivariate polynomial.
	Such approaches are also NP-hard problems, meaning they can not be solved in polynomial time by a non-deterministic Turing machine, and only lower bounds of the global maximum can be found in polynomial time \cite{BPT13}.
	We apply one state-of-the-art realization of such an algorithm to find the maximum of a polynomial \cite{HLL07}, using semidefinite programming, instead of the problem of finding the maximal separability eigenvalue of an operator. Semidefinite programming is a frequently applied technique used for entanglement tests, cf., e.g., Refs.~\cite{VB96,R01,AM02,PDS02,E03,BV04,DPS05}.
	Already in a $3\times3$ case, the algorithm in Ref.~\cite{HLL07} failed to be conclusive and, in fact, returned a lower value than our SPI.
	For use as an optimal witness, the true maximal separability eigenvalue is crucial; thus, the result of the competing algorithm could lead to a false indication of entanglement.
	In all other tested cases in which the algorithm was conclusive, our SPI was superior in terms of speed and accuracy.

\section{Examples} \label{Sec:Example}

	As a proof of principle, let us apply our algorithm to detect entanglement of states of special relevance.
	Specifically, we study the two-qutrit Horodecki state \cite{HHH99} and the four-qubit Smolin state \cite{S01}.
	Both states have been classified as bound-entangled states.
	In the case of the Horodecki state, this arises from the dimensions of the state, which is acting on a $3\times3$-dimensional Hilbert space.
	The Smolin state acts on a $2\times2\times2\times2$-dimensional Hilbert space, and the bound-entangled nature arises from the fact that the state is separable with respect to all bipartitions consisting of two subsystems each, yet still entangled in all other partitions.
	By applying the SPI algorithm, we aim at confirming the weak entanglement properties of those bound entangled states for which the well-known partial transposition test \cite{P96,H97} fails to be conclusive.

	The first example, the Horodecki state, is defined as \cite{HHH99}
	\begin{align}
		\hat{\rho}_\alpha=\frac{1}{7}\left(2|\Psi\rangle\langle\Psi|+\alpha\hat{\sigma}_++(5-\alpha)\hat\sigma_-\right),
	        \label{Eq:2SpinState}
	\end{align}
	where $\hat{\sigma}_+=(|0,1\rangle\langle0,1|+|1,2\rangle\langle1,2|+|2,0\rangle\langle2,0|)/3$ and $\hat{\sigma}_-=(|1,0\rangle\langle1,0|+|2,1\rangle\langle2,1|+|0,2\rangle\langle0,2|)/3$ are separable and $|\Psi\rangle=(|0,0\rangle+|1,1\rangle+|2,2\rangle)/\sqrt 3$ is the entangled contribution.
	The parameter can be chosen as $0\leq\alpha\leq5$; otherwise the density operator $\hat{\rho}_\alpha$ does not represent a physical state.
	The Horodecki state was shown to be entangled for $\alpha>3$ and $\alpha<2$ \cite{HHH99}.

	For our entanglement analysis based on the criterion \eqref{Eq:EntanglementCondition}, a positive-definite, Hermitian operator $\hat{L}$ is required.
	For simplicity, the test operator will be chosen as $\hat{L}_\beta=\hat{\rho}_\beta$.
	We calculate the maximal separability eigenvalues $g_\beta$ for every $\hat{L}_\beta$ where $0\leq\beta\leq5$.
	In this entanglement test, a state is verified to be entangled if
	\begin{align}
	        g_\beta-\mathrm{tr}(\hat{\rho}_\alpha\hat{L}_\beta)<0,
	        \label{Eq:Exampleentangled}
	\end{align}
	which corresponds to the criterion based on the entanglement witness $\hat W_\beta=g_\beta\hat 1-\hat L_\beta$.

\begin{figure}
	\includegraphics[width=0.5\columnwidth]{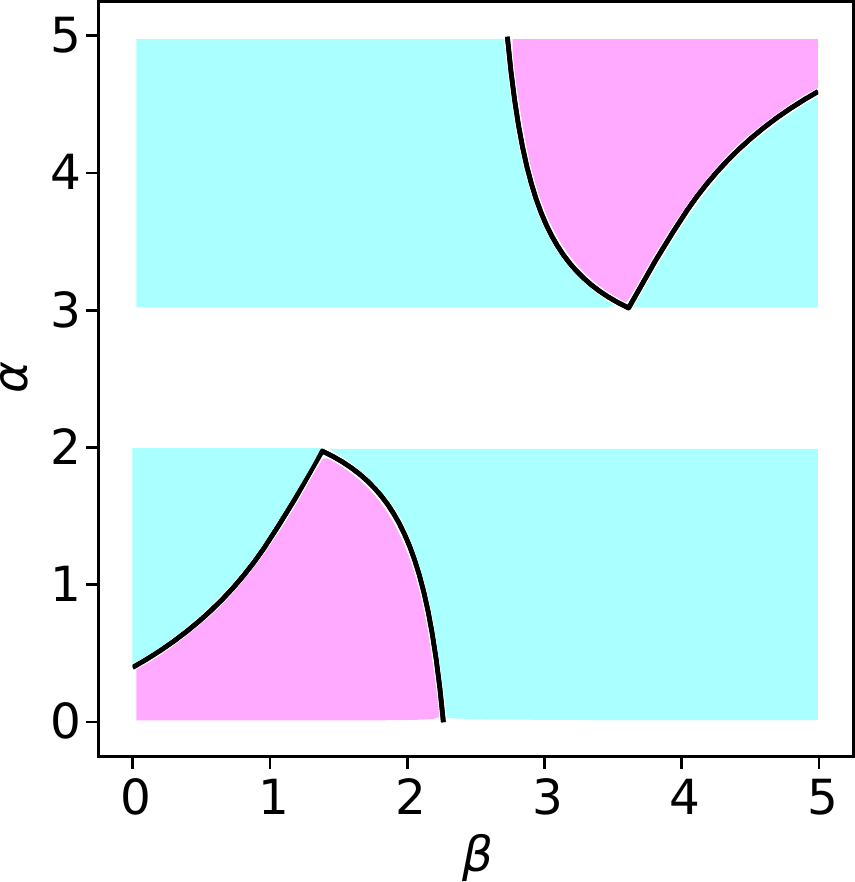}
	\caption{
		Results of the entanglement test for the bound-entangled, two-qutrit Horodecki state.
		No state $\hat{\rho}_{\alpha}$ with $2\leq\alpha\leq3$ (blank area) has been detected as entangled.
		In the magenta areas, the criterion Eq. \eqref{Eq:Exampleentangled} certifies entanglement for the given combination of $\alpha$ and $\beta$, which does not hold true for the cyan areas.
	}\label{Fig:Results}
\end{figure}

	The results are shown in Fig. \ref{Fig:Results}.
	The entanglement criterion \eqref{Eq:Exampleentangled} is satisfied in the magenta colored areas.
	The blank area corresponds to parameters $2\leq \alpha\leq 3$ for which no entanglement could be detected, which agrees with the prediction in Ref. \cite{HHH99}.
	In all other cases (cyan area), there exists at least one other value $\beta'$ for which $\hat{L}_{\beta'}$ verifies entanglement.
	Thus, we correctly and straightforwardly certify entanglement of all Horodecki states, which are positive under partial transposition, using our SPI approach.

	Beyond the bipartite case, let us apply our method to the multipartite scenario for which the partial transposition criterion does not apply in principle.
	For this reason, we study the four-partite Smolin state \cite{S01},
	\begin{align}\label{Eq:originalSmolin}
		\hat{S} = \frac{1}{16}\left(\hat{1}+\hat{\sigma}_x^{\otimes 4}+\hat{\sigma}_y^{\otimes 4}+\hat{\sigma}_z^{\otimes 4}\right),
	\end{align}
	where $\hat\sigma_x$, $\hat\sigma_y$, and $\hat\sigma_z$ denote the Pauli spin matrices.
	We restrict ourselves to a test operator of the simple form $\hat L=\hat S$.

	In the multipartite case, we can analyze different forms of entanglement, such as bipartitions, tripartitions, and four-partitions for the state under study.
	In total, we have 14 partitions.
	However, because of the symmetry, cf. Eq. \eqref{Eq:originalSmolin}, we can restrict ourselves to the bipartitions $\{1\}{:}\{2,3,4\}$ and $\{1,2\}{:}\{3,4\}$, the tripartition $\{1\}{:}\{2\}{:}\{3,4\}$, and the four-partition $\{1\}{:}\{2\}{:}\{3\}{:}\{4\}$.

	The SPI algorithm was run for all partitions.
	The results are listed in Table \ref{Tab:Smolin}.
	For applying entanglement criterion Eq. \eqref{Eq:EntanglementCondition}, we additionally compute $\mathrm{tr}(\hat{L}\hat{S})=1/4$.
	Thus, in agreement with the results in Ref. \cite{S01}, entanglement could be verified for all partitions, except for the bipartition which consists of two subsystems each, i.e., $\{1,2\}{:}\{3,4\}$.
	
\begin{table}
	\caption{
		Separability eigenvalues of the operator $\hat{L}=\hat S$ [Eq. \eqref{Eq:originalSmolin}].
		The maximal separability eigenvalues $g_\mathrm{max}$ are listed for the corresponding partitions.
	}\label{Tab:Smolin}
	\begin{tabular}{c c}
		\hline\hline
		\quad \quad \quad Partition \quad \quad \quad  & \quad \quad $g_\mathrm{max}$ \quad \quad
		\\\hline
		$\{1,2\}{:}\{3,4\}$ & 0.250
		\\
		$\{1\}{:}\{2,3,4\}$ & 0.125
		\\
		$\{1\}{:}\{2\}{:}\{3,4\}$ & 0.125
		\\
		$\{1\}{:}\{2\}{:}\{3\}{:}\{4\}$ & 0.125
		\\\hline\hline
\end{tabular}
	\end{table}

	In this section, we demonstrated the direct application of our SPI algorithm to construct entanglement probes, for example, to identify bound instances of entanglement.
	We deliberately chose such weakly entangled states, which have been characterized previously to challenge our method and compare our numerical results with sophisticated exact analysis.
	In particular, entanglement was verified in bipartite qudit and multipartite qubit states.
	The entanglement of the states under study is a challenge for other directly applicable methods as the partial transposition criterion gives inconclusive results.

\section{Discussion}\label{Sec:discussion}

	We introduced, implemented, and applied a method to numerically construct entanglement tests.
	In this section, let us discuss how this technique can be used in experiments, how it improves other entanglement probes, and how it can be generalized to detect other forms of entanglement.
	Finally, we discuss future research directions that become accessible with our approach and address the interdisciplinary importance of the introduced technique by relating it to a current problem in pure mathematics.

\subsection{Experimental implementation}

	A major benefit of our approach is the direct applicability in experiments.
	Suppose that the set of observables $\{\hat M_k:k=1,\ldots,m\}$ describes a measurement scheme.
	In other words, the data yield the expectation values $\langle\hat M_k\rangle=\mathrm{tr}(\hat M_k\hat\rho)$.
	An example for such operators relates to a displaced photon-number correlation \cite{KVS17}.
	In general, a family of positive operators $\hat L$ can be constructed from the considered measurements,
	\begin{align}
		\hat L=\nu\hat 1+\sum_{k=1}^m\mu_k\hat M_k,
	\end{align}
	by choosing real-valued coefficients $\mu_k$ and adjusting $\nu$ to ensure positivity of $\hat L$.

	The entanglement criterion \eqref{Eq:EntanglementCondition} can be applied.
	On the one hand, the experimental expectation value is given by $\langle \hat L\rangle=\nu+\sum_{k=1}^m \mu_k\langle \hat M_k\rangle$.
	On the other hand, we get the maximal expectation value for separable states, $g_{\max}$, from the application of our SPI to the family of operators $\hat L$ under study.
	Note that a variation over the coefficients $\mu_k$ also enables an optimal entanglement verification based on the set of measured observables, similarly to the technique applied to Gaussian measurements in Refs. \cite{GSVCRTF15,GSVCRTF16}.

\subsection{Relations to other entanglement criteria}

	As mentioned earlier, our entanglement criteria are identical to witnesses [Eq. \eqref{Eq:witnessconstruction}].
	Furthermore, based on the Choi-Jamio\l{}kowski isomorphism \cite{C75,J72}, entanglement witnesses enable the formulation of positive, but not completely positive maps to probe entanglement \cite{HHH96,HHH01}.
	Thus, our numerical method can be used to construct previously unknown families of such maps.
	For instance, the test operators that verified the entanglement of the bound-entangled states (Sec. \ref{Sec:Example}) necessarily lead to maps that go beyond the partial transposition since the partial transposition cannot detect the entanglement of states considered in those examples.

	In addition, in Ref. \cite{RH14}, an elegant approach was formulated that enables the construction of device-independent entanglement witnesses from device-dependent ones.
	This technique is based on a matrix-product extension that assigns to each subsystem an auxiliary Hilbert space, but requires the previous knowledge of a witness.
	Such desired initial witnesses can be provided by our algorithm and combined with the method from Ref. \cite{RH14} to construct device-independent entanglement witnesses.

\subsection{Outlook}

	Beyond the witnessing of multipartite entanglement, the SEE approach has been generalized.
	Thus, let us briefly discuss some future generalizations of our numerical method for the aim of exploring entanglement in a broader context.
	
	The detection of $K$-entanglement, and thus of \textit{genuine entanglement}, is possible by finding the maximum of all maximal separability eigenvalues for an operator with respect to partitions of the length $K$ \cite{GSVCRTF16}.
	It is therefore a straightforward extension to the SPI to find the optimal witness for $K$-entanglement with the introduced algorithm---the algorithm is run multiple times for different partitions and the maximum of the results is the required separability eigenvalue.

	Furthermore, some physical problems require solutions of a generalized EE, $\hat L|\Phi\rangle=\lambda\hat P|\Phi\rangle$, where the right-hand side includes a contribution that is different from the identity, $\hat P\neq\hat 1$.
	Interestingly, the same holds true for the SEE.

	One example is the verification of entanglement in systems of indistinguishable particles, which is based on a generalized SEE and where $\hat P$ represents the (anti)symmetrization operator for bosons (fermions) \cite{RSV15}.
	Another example is the quantification of multipartite entanglement via generalized Schmidt-number witnesses \cite{SSV14}.
	There, $\hat P$ takes the form of a spinor projection (details can be found in the supplement to Ref. \cite{SSV14}).
	A third example is the detection of multipartite entanglement in systems for which the number of subsystems is not fixed.
	For instance, the underlying generalized SEE applies to the construction of multiparticle-entanglement witnesses for fluctuating particle numbers \cite{SW17}.

	Thus, a generalization of the SPI to account for such generalized SEEs, including $\hat P$, will further enhance the range of applications.
	It is worth mentioning that the desired generalization is well known for the PI, which is likely to be applicable to the SPI in a similar manner.

	Furthermore, the standard EE applied to the density operator leads to the spectral decomposition of the state.
	Similarly, the SEE can be used to expand the density operator in terms of separability eigenvectors and a quasiprobability distribution \cite{SV09quasi,SW18}.
	The latter one includes negativities iff the state is entangled; see Ref. \cite{TBV17} for an application to uncover bound entanglement.
	However, this approach requires the computation of all separability eigenvectors.
	Therefore, similar to the subspace iteration for the PI, a generalization of the SPI to include all solutions, beyond the one that corresponds to the maximal separability eigenvalue, could lead to a broader applicability of entanglement quasiprobabilities.

	As a final example let us consider the dynamics of quantum systems, which is described by the Schr\"odinger equation.
	To distinguish the entanglement-generating evolution from the separable dynamics, we recently introduced the separability Schr\"odinger equations \cite{SW17dyn}, which relate to the SEE in the static case.
	Again, the SPI can be the starting point for the numerical implementation of this approach.

	Thus, generalizations of the SPI have the potential to uncover multipartite entanglement in a much broader sense.
	Beyond the already-available construction of positive, but not completely positive maps and device-independent entanglement witnesses, our numerical approach builds the foundation for the future studies of entanglement.
	
\subsection{Relations to mathematical problems}

	The question of positive polynomials is an interesting and, in the most general case, unsolved mathematical problem, which has been studied for a long time \cite{B79} and finds many applications \cite{L09}.
	As already indicated in Sec. \ref{Sec:Benchmark}, any entanglement witness can be characterized by the non-negativity of a multivariate polynomial \cite{BPT13}.
	All entanglement witnesses can be generated through the solution of the SEE.
	Therefore, the solution of the SEE enables the construction and characterization of positive multivariate polynomials; see also Appendix \ref{App:SEE}.
	Consequently, the proposed SPI is an alternative approach to numerically solving the positivity problem of polynomials.

	Another family of important problems in pure mathematics that could benefit from the SPI are partial differential equations, which are also closely related to many problems in physics.
	For instance, the applicability of the method of separation of variables corresponds to the question of whether or not solutions are factorizable, i.e., a tensor product.
	Since a separable eigenfunction is also a separability eigenfunction \cite{SV09}, i.e., eigenvector in the function space, the SPI can be applied to find factorizable solutions of the partial differential equation.

	Moreover, nonlinear partial differential equations address questions such as finding the ground state to a nonlinear energy functional.
	If this functional is polynomial, a problem related to the previously mentioned characterization of multivariate polynomials can be formulated.
	Namely, the numerical approximation to the ground state can be obtained by the multipartite SPI as the maximum of the negative nonlinear energy functional, resulting in the minimal energy.

\section{Conclusion}\label{Sec:Conclusion}

	In this paper, we introduce an algorithm, the SPI, to numerically construct arbitrary multipartite entanglement witnesses.
	This algorithm enables us to find the maximal separability eigenvalues, which directly results in measurable entanglement tests.
	Beyond the formulation of our method, we also provide the mathematical background for the SPI, which yields the maximal solution of the nonlinear separability eigenvalue problem addressing the complex entanglement problem in quantum physics.
	Furthermore, our framework is supplemented by performing a benchmark of our approach, applying it to uncover hard-to-detect forms of entanglement, and relating it to other methods in the theory of quantum entanglement and their experimental application.

	Our algorithm shows two crucial steps---namely, forward and backward iteration---following directly from the cascaded structure of the separability eigenvalue equations.
	The forward iteration reduces the number of parties until we have a single-party problem, which is then used in the backward iteration to solve the multipartite problem.
	This property also allows us to prove the convergence of the SPI to reliably produce entanglement tests based on arbitrary observables.
	Interestingly, our algorithm includes the well-known power iteration, which is able to calculate the maximal (standard) eigenvalue, as a special case.

	We show the efficiency of our approach in comparison with another method, which is mainly based on a genetic algorithm.
	The genetic algorithms presents a state-of-the-art approach to solve arbitrary optimization problems.
	The SPI is faster by two orders of magnitude, which is partly because of its directed design to specifically address the entanglement problem.
	For example, we analyze the runtime as a function of the dimension of a bipartite quantum system.
	In addition, we numerically solve the separability eigenvalue equations in a feasible time for operators up to a 13-party qubit Hilbert space, corresponding to 8\,192 dimensions.

	Furthermore, we apply the SPI to bound-entangled states whose entanglement detection is a cumbersome problem.
	For instance, the frequently applied partial transposition criterion fails to uncover the entanglement of the considered examples.
	Applying the SPI, we straightforwardly verify this weak form of entanglement, proving the advantage of our method.
	Moreover, we demonstrate with these examples that our algorithm renders it possible to uncover entanglement of all forms of partial entanglement in multipartite systems.
	It is also worth mentioning that entanglement of continuous-variable systems can be detected in finite subspaces, allowing us to apply our algorithm to these kind of states as well.

	We outline the versatile nature of our method and its impact on future research by relating it to other open problems in quantum entanglement and beyond.
	For instance, the construction of entanglement witnesses, which is achieved by our SPI, is the basis for the formulation of positive, but not completely positive maps for entanglement detection and the construction of device-independent entanglement witnesses.
	Furthermore, we describe the construction of entanglement criteria based on measured quantities and outline several generalizations, which are---at their core---related to our method.

	Thus, we devise a relatively simple, yet versatile approach to numerically construct entanglement tests in multipartite systems.
	The direct implementation of our method enables us to certify complex forms of quantum correlations based on measurable criteria.
	In addition, we derive the required mathematical background of our algorithm to ensure its operation and benchmark its performance.
	To the best of our knowledge, there exists no alternative method of entanglement verification that is applicable to complex systems that our method can manage.
	To summarize, we provide a full numerical framework for the detection of multipartite entanglement for theoretical studies and, more importantly, for application in current and future experiments using entanglement in quantum information and communication protocols.

\begin{acknowledgments}
	This work has received funding from the European Union's Horizon 2020 research and innovation program under Grant Agreement No. 665148 (QCUMbER).
\end{acknowledgments}

\appendix
\setcounter{theorem}{0}
\section{Proof of Theorem \ref{Th:Forward}}\label{App:Forward}
	\begin{theorem}[Forward iteration]
		Let $|a_1,\ldots,a_N\rangle$ be the separability eigenvector corresponding to the maximal separability eigenvalue of a positive $N$-partite operator $\hat{L}$.
		Furthermore, let $|\Psi\rangle=\hat{L}|a_1,\ldots,a_N\rangle$.
		For the $(N-1)$-partite operator $\hat{L}'=\mathrm{tr}_N(|\Psi\rangle\langle\Psi|)$, the equality
		\begin{align}
			\label{Eq:THM1Proof}
		\begin{aligned}
			&\langle a_1,\ldots,a_N|\hat{L}|a_1,\ldots,a_N\rangle
			\\
			=&\sqrt{\langle a_1,\ldots,a_{N-1} |\hat{L}'|a_1,\ldots,a_{N-1}\rangle}
		\end{aligned}
		\end{align}
		holds true.
	\end{theorem}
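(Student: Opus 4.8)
The plan is to establish the identity by a direct computation: I would unfold the partial trace defining $\hat L'$ on the right-hand side and then substitute the separability eigenvalue equation \eqref{Eq:SEE2} obeyed by the maximal separability eigenvector. Writing $g=\langle a_1,\ldots,a_N|\hat L|a_1,\ldots,a_N\rangle$ for the left-hand side, the defining property of the maximal separability eigenvector gives $|\Psi\rangle=\hat L|a_1,\ldots,a_N\rangle=g|a_1,\ldots,a_N\rangle+|\chi\rangle$, where $|\chi\rangle$ is $N$ orthogonal to $|a_1,\ldots,a_N\rangle$. Since $\hat L>0$ forces $g>0$, the square root on the right-hand side will be unambiguous.

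First I would rewrite the right-hand side as a squared norm living in the $N$th subsystem. Expanding $\hat L'=\mathrm{tr}_N(|\Psi\rangle\langle\Psi|)$ in an orthonormal basis $\{|e_k\rangle\}$ of $\mathcal H_N$ gives
\begin{align}
\begin{aligned}
&\langle a_1,\ldots,a_{N-1}|\hat L'|a_1,\ldots,a_{N-1}\rangle
\\
=&\sum_k\bigl|\langle a_1,\ldots,a_{N-1},e_k|\Psi\rangle\bigr|^2
=\langle v|v\rangle,
\end{aligned}
\end{align}
where $|v\rangle=\langle a_1,\ldots,a_{N-1}|\Psi\rangle\in\mathcal H_N$ is obtained by projecting the first $N-1$ tensor factors of $|\Psi\rangle$ onto $\langle a_1|,\ldots,\langle a_{N-1}|$. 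Equivalently, the expectation value equals $\langle\Psi|(|a_1,\ldots,a_{N-1}\rangle\langle a_1,\ldots,a_{N-1}|\otimes\hat 1)|\Psi\rangle$, which makes the identification with $\langle v|v\rangle$ transparent.

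The key step is to evaluate $|v\rangle$ from the SEE. Projecting $|\Psi\rangle=g|a_1,\ldots,a_N\rangle+|\chi\rangle$ onto $\langle a_1,\ldots,a_{N-1}|$, the eigenvector term contributes $g|a_N\rangle$ because $\langle a_j|a_j\rangle=1$ for $j<N$, whereas the $|\chi\rangle$ term contributes the zero vector: the $j=N$ component of the $N$ orthogonality condition states exactly that $\langle a_1,\ldots,a_{N-1},x|\chi\rangle=0$ for every $|x\rangle\in\mathcal H_N$, hence $\langle a_1,\ldots,a_{N-1}|\chi\rangle=0$ in $\mathcal H_N$. Thus $|v\rangle=g|a_N\rangle$, and using $\langle a_N|a_N\rangle=1$ I obtain $\langle v|v\rangle=g^2$. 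Taking the positive square root then yields $\sqrt{\langle a_1,\ldots,a_{N-1}|\hat L'|a_1,\ldots,a_{N-1}\rangle}=g$, which is the asserted equality.

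I do not expect a genuine obstacle, as the argument reduces to a short computation resting entirely on Eq. \eqref{Eq:SEE2}. The one point demanding care is the bookkeeping of which subsystem each bra acts on; in particular, it is precisely the $j=N$ part of the $N$ orthogonality condition---and none of the other parts---that annihilates the $|\chi\rangle$ contribution and collapses $|v\rangle$ to a multiple of $|a_N\rangle$. The correct branch of the square root is then fixed by the positivity hypothesis $\hat L>0$, which guarantees $g>0$.
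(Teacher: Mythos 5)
Your proof is correct, but it follows a genuinely different route from the paper's. You evaluate both sides pointwise at the given separability eigenvector: you decompose $|\Psi\rangle=g|a_1,\ldots,a_N\rangle+|\chi\rangle$ via the SEE, observe that $\langle a_1,\ldots,a_{N-1}|\hat L'|a_1,\ldots,a_{N-1}\rangle=\|\langle a_1,\ldots,a_{N-1},\,\cdot\,|\Psi\rangle\|^2$, and use precisely the $j=N$ part of the $N$-orthogonality of $|\chi\rangle$ to collapse the partial projection to $g|a_N\rangle$, giving $g^2$ under the square root; positivity of $\hat L$ fixes the sign of the root. The paper instead proves the identity as a chain of equalities between \emph{maxima} over the set $\mathcal S$ of separable states, starting from $\max_{|v\rangle\in\mathcal S}\langle v|\hat L|v\rangle=\max_{|v\rangle,|v'\rangle\in\mathcal S}\langle v'|\hat L|v\rangle$ and invoking the theorem of cascaded structures in the last step. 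Your argument is more elementary and in fact slightly more general: it uses only the SEE and positivity, so it establishes the identity for \emph{every} separability eigenvector of $\hat L$, not only the maximal one, and it does not lean on the cascaded-structure theorem. What the paper's variational phrasing buys is the algorithmic content that the SPI actually exploits---namely that maximizing $\langle v|\hat L|v\rangle$ over $\mathcal S$ is equivalent to the reduced $(N-1)$-partite maximization for $\hat L'$---which your pointwise computation does not by itself deliver. As a verification of the stated equality, your proof is complete and clean.
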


	\begin{proof}
		As a shorthand notation, let $|v_N\rangle=|a_1,\ldots,a_N\rangle$.
		Using the cascaded structure (CS) and the abbreviation $\hat{L}'=\mathrm{tr}_N|\Psi\rangle\langle\Psi|$, the statement is derived as follows:
		\begin{align*}
		     \underset{|v_N\rangle\in\mathcal S}{\mathrm{max}}\langle v_N|\hat{L}|v_N\rangle
			&=	\underset{|v_N\rangle,|v'_N\rangle\in\mathcal S}{\mathrm{max}}\langle v'_N|\underbrace{\hat{L}|v_N\rangle}_{=:|\Psi(v_N)\rangle}\\
			&=	\underset{|v_N\rangle\in\mathcal S}{\mathrm{max}}\,\underset{|v'_N\rangle\in\mathcal S}{\mathrm{max}}\langle v'_N|\Psi(v_N)\rangle\\
			&=	\underset{|v_N\rangle\in\mathcal S}{\mathrm{max}}\,\sqrt{\underset{|v'_N\rangle\in\mathcal S}{\mathrm{max}}\left(\langle v'_N|\Psi(v_N)\rangle\right)^2}\\
			&=	\underset{|v_N\rangle\in\mathcal S}{\mathrm{max}}\,\sqrt{\underset{|v'_N\rangle\in\mathcal S}{\mathrm{max}}\langle v'_N|\Psi(v_N)\rangle\langle\Psi(v_N)|v'_N\rangle}\\
			&\overset{\mathrm{CS}}{=}	\underset{|v_N\rangle\in\mathcal S}{\mathrm{max}}\,\sqrt{\underset{|v'_{N-1}\rangle\in\mathcal S}{\mathrm{max}}\langle v'_{N-1}|\hat{L}'(v)|v'_{N-1}\rangle},
		\end{align*}
		where we chose global phases such that scalar products correspond to non-negative numbers.
	\end{proof}
	
\section{Proof of Theorem \ref{Th:Monotony}}\label{App:Monotony}
\addtocounter{theorem}{1}
	\begin{theorem}[Monotony]
		Let $|a_1,\ldots,a_N\rangle\in\mathcal S$ and $|a'_1,\ldots,a'_N\rangle\in\mathcal{S}$ such that
	        \begin{align}
	                |a'_1,\ldots,a'_N\rangle=\underset{|b_1,\ldots,b_N\rangle\in\mathcal S}{\mathrm{arg\,max}}\langle b_1,\ldots,b_N|\hat{L}|a_1,\ldots,a_N\rangle.
	        \end{align}
		Then the inequality
		\begin{align}
			\label{Eq:MonotonyProof}
		\begin{aligned}
			&\langle a_1,\ldots,a_N|\hat{L}|a_1,\ldots,a_N\rangle
			\\
			\leq&
			\langle a'_1,\ldots,a'_N|\hat{L}|a_1,\ldots,a_N\rangle
			\\
			\leq&
			\langle a'_1,\ldots,a'_N|\hat{L}|a'_1,\ldots,a'_N\rangle
		\end{aligned}
		\end{align}
		holds true.
		Furthermore, the equality in Eq. \eqref{Eq:MonotonyProof} holds true iff $|a'_1,\ldots,a'_N\rangle=| a_1,\ldots,a_N\rangle$.
	\end{theorem}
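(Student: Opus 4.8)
The plan is to recognize this statement as the concrete instance of the abstract monotonicity argument of Sec.~\ref{Sec:InitialThoughts}, specialized to the closed and bounded set $\mathcal{T}=\mathcal{S}$ of normalized separable product states. To streamline notation I would abbreviate $|z\rangle=|a_1,\ldots,a_N\rangle$ and $|z'\rangle=|a'_1,\ldots,a'_N\rangle$ and work throughout with the $\hat{L}$-induced scalar product $\langle x|y\rangle_{\hat{L}}=\langle x|\hat{L}|y\rangle$. Because $\hat{L}$ is a positive operator [Eq.~\eqref{Eq:PositiveOperator}], this is a genuine inner product: it is positive definite, so $\langle z|z\rangle_{\hat{L}}>0$ for $|z\rangle\neq0$, and the Cauchy--Schwarz inequality $|\langle x|y\rangle_{\hat{L}}|^2\leq\langle x|x\rangle_{\hat{L}}\langle y|y\rangle_{\hat{L}}$ holds, with equality exactly when $|x\rangle\parallel|y\rangle$. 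As in the main text, I would fix the global phase of $|z'\rangle$ so that $\langle z'|z\rangle_{\hat{L}}\geq0$.

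First I would establish the left inequality of Eq.~\eqref{Eq:MonotonyProof}. The defining property of $|z'\rangle$ is that it maximizes $\langle b|z\rangle_{\hat{L}}$ over all $|b\rangle\in\mathcal{S}$; since $|z\rangle$ itself lies in $\mathcal{S}$, the maximizer cannot do worse than the competitor $|z\rangle$, which yields $\langle z|z\rangle_{\hat{L}}\leq\langle z'|z\rangle_{\hat{L}}=|\langle z'|z\rangle_{\hat{L}}|$. This is precisely the first step of Eq.~\eqref{Eq:MonotonyProof}. Note that the $\mathrm{arg\,max}$ appearing here is exactly the one computed by the forward and backward iterations (Theorems~\ref{Th:Forward} and~\ref{Th:Backward}) via the cascaded structure, so no separate existence argument is required.

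For the right inequality I would chain Cauchy--Schwarz onto the previous bound, writing out
\begin{align*}
\langle z|z\rangle_{\hat{L}}^2\leq|\langle z'|z\rangle_{\hat{L}}|^2\leq\langle z'|z'\rangle_{\hat{L}}\langle z|z\rangle_{\hat{L}}\leq\langle z'|z'\rangle_{\hat{L}}|\langle z'|z\rangle_{\hat{L}}|.
\end{align*}
Comparing the first and third expressions and dividing by $\langle z|z\rangle_{\hat{L}}>0$ gives $\langle z|z\rangle_{\hat{L}}\leq\langle z'|z'\rangle_{\hat{L}}$, while comparing the second and fourth and dividing by $|\langle z'|z\rangle_{\hat{L}}|>0$ reproduces the intermediate bound. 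Together these yield the full ordering $\langle z|z\rangle_{\hat{L}}\leq|\langle z'|z\rangle_{\hat{L}}|\leq\langle z'|z'\rangle_{\hat{L}}$, and translating back from the $\hat{L}$-scalar product to expectation values of $\hat{L}$ reproduces Eq.~\eqref{Eq:MonotonyProof}.

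Finally, I would settle the equality characterization. Equality across the whole chain forces $\langle z|z\rangle_{\hat{L}}=|\langle z'|z\rangle_{\hat{L}}|=\langle z'|z'\rangle_{\hat{L}}$, which in turn makes the Cauchy--Schwarz step tight, i.e.\ $|z\rangle\parallel|z'\rangle$; since both vectors are normalized and the phase was fixed so that $\langle z'|z\rangle_{\hat{L}}\geq0$, parallelism collapses to $|z'\rangle=|z\rangle$, and the converse is immediate. The main obstacle I anticipate is bookkeeping rather than conceptual: one must invoke \emph{positive-definiteness} of $\hat{L}$ (not mere positive-semidefiniteness) so that the divisions by $\langle z|z\rangle_{\hat{L}}$ and $|\langle z'|z\rangle_{\hat{L}}|$ are legitimate and the Cauchy--Schwarz equality case is sharp, and one must track the phase convention carefully so that the parallelism case coincides with genuine equality of the product states.
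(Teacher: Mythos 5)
Your proposal is correct and follows essentially the same route as the paper's proof in Appendix~\ref{App:Monotony}: the left inequality from the defining property of the $\mathrm{arg\,max}$ over $\mathcal S\ni|a_1,\ldots,a_N\rangle$, and the right inequality from the Cauchy--Schwarz inequality for the $\hat L$-induced scalar product, with the equality case traced back to parallelism of the two product vectors. The only cosmetic difference is that you chain the inequalities exactly as in Sec.~\ref{Sec:InitialThoughts}, whereas the appendix rearranges the same Cauchy--Schwarz step by first deriving $\langle v_N|v_N\rangle_{\hat L}\leq\langle v'_N|v'_N\rangle_{\hat L}$ and substituting back; the content is identical.
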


	\begin{proof}
		The inequality
		\begin{align*}
			\langle a_1,\ldots,a_N|\hat{L}|a_1,\ldots,a_N\rangle
			\leq
			\langle a'_1,\ldots,a'_N|\hat{L}|a_1,\ldots,a_N\rangle
		\end{align*}
		directly follows from the definition of $|a'_1,\ldots,a'_N\rangle$.
		The second inequality
		\begin{align*}
			\langle a'_1,\ldots,a'_N|\hat{L}|a_1,\ldots,a_N\rangle
			\leq
			\langle a_1',\ldots,a_N'|\hat{L}|a_1',\ldots,a_N'\rangle
		\end{align*}
		can be proved using the Cauchy-Schwarz inequality (CSI). 
		As a shorthand, let us define $|v_N\rangle=|a_1,\ldots,a_N\rangle$ and $|v'_N\rangle=|a'_1,\ldots,a'_N\rangle$ and consider the $\hat{L}$-induced scalar product $\langle v|\rangle v_{\hat{L}}=\langle v|\hat{L}|v\rangle$:
		\begin{align*}
			\langle v_N|v_N\rangle_{\hat{L}}^2&\leq
			\langle v'_N|v_N\rangle_{\hat{L}}^2\overset{\mathrm{CSI}}{\leq}
				\langle v_N|v_N\rangle_{\hat{L}}\langle v'_N|v'_N\rangle_{\hat{L}}\\
			\Leftrightarrow
			\langle v_N|v_N\rangle_{\hat{L}}&\leq
				\langle v'_N|v'_N\rangle_{\hat{L}}\\
			\Rightarrow
			\langle v'_N|v_N\rangle_{\hat{L}}^2&\leq
				\langle v'_N|v'_N\rangle_{\hat{L}}\langle v'_N|v'_N\rangle_{\hat{L}}\\
			\Rightarrow
			\langle v'_N|v_N\rangle_{\hat{L}}&\leq
				\langle v'_N|v'_N\rangle_{\hat{L}}\\
		\end{align*}
		Here, the second row follows from reduction by $\langle v_N|v_N\rangle_{\hat{L}}$; the third row can be found by substituting the inequality in row two into the right side of the inequality in row one.
		Note that the equality holds if and only if $|v_N\rangle\parallel|v'_N\rangle$.
	\end{proof}

\section{Proof of Theorem \ref{Th:Local}}\label{App:Local}

	\begin{theorem}[Local convergence]
	        For any starting vector, the sequence $(g^{(s)})_s$ of expectation values generated by the SPI converges, i.e., the limit
	        \begin{align}
			\label{Eq:LocalConvergenceProof}
	                \lim_{s\to\infty} g^{(s)} = \bar{g}
	        \end{align}
		exists and is bounded as $0\leq\bar{g}\leq g_{\mathrm{max}}$.
	\end{theorem}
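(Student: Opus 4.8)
The plan is to recognize this statement as a direct instance of the monotone convergence theorem for real sequences: I would show that $(g^{(s)})_s$ is a monotonically non-decreasing sequence of real numbers that is bounded both below and above, from which the existence of the limit and the claimed bounds follow at once. No compactness of $\mathcal S$ is needed for this step—only monotonicity together with a uniform bound on the terms.

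First I would establish monotonicity. By construction, each cycle of the SPI produces $|a_1^{(s+1)},\ldots,a_N^{(s+1)}\rangle=\mathrm{arg\,max}_{|b_1,\ldots,b_N\rangle\in\mathcal S}\langle b_1,\ldots,b_N|\hat{L}|a_1^{(s)},\ldots,a_N^{(s)}\rangle$, so Theorem \ref{Th:Monotony} applies verbatim with $|a_1,\ldots,a_N\rangle=|a_1^{(s)},\ldots,a_N^{(s)}\rangle$ and $|a'_1,\ldots,a'_N\rangle=|a_1^{(s+1)},\ldots,a_N^{(s+1)}\rangle$. The chain of inequalities \eqref{Eq:Monotony} then yields $g^{(s)}\leq g^{(s+1)}$ for every $s$, so the sequence is non-decreasing.

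Next I would establish the two bounds on the terms themselves. Each $g^{(s)}=\langle a_1^{(s)},\ldots,a_N^{(s)}|\hat{L}|a_1^{(s)},\ldots,a_N^{(s)}\rangle$ is the expectation value of $\hat{L}$ in a normalized separable state. Since $\hat{L}$ is a positive operator [Eq. \eqref{Eq:PositiveOperator}], every such expectation value is non-negative, giving $g^{(s)}\geq 0$. For the upper bound, recall that $g_{\max}$ is precisely the maximum of $\langle a_1,\ldots,a_N|\hat{L}|a_1,\ldots,a_N\rangle$ over all normalized separable states; because every iterate $|a_1^{(s)},\ldots,a_N^{(s)}\rangle$ lies in $\mathcal S$, it follows that $g^{(s)}\leq g_{\max}$. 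Hence $0\leq g^{(s)}\leq g_{\max}$ uniformly in $s$.

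Finally, a monotonically non-decreasing real sequence that is bounded above converges to its supremum, so the limit $\bar{g}=\lim_{s\to\infty}g^{(s)}$ exists; since the bounds $0\leq g^{(s)}\leq g_{\max}$ are preserved in the limit, we obtain $0\leq\bar{g}\leq g_{\max}$. I do not expect a serious obstacle here; the only point requiring care is that the upper bound is the maximum over the \emph{full} set $\mathcal S$ rather than merely over the generated iterates, so that it genuinely dominates the whole sequence—this is immediate from the characterization of $g_{\max}$ as the maximal separability eigenvalue. Note that the statement deliberately does not assert $\bar{g}=g_{\max}$: the limit may correspond to a local rather than a global maximum, which is exactly why the global-convergence claim is postponed to Theorem \ref{Th:Global}.
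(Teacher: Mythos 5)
Your proposal is correct and follows essentially the same route as the paper's own proof: monotonicity of $(g^{(s)})_s$ via Theorem \ref{Th:Monotony} applied to consecutive iterates, boundedness of the terms, and then the monotone convergence theorem for real sequences. Your version is slightly more explicit in that it bounds the sequence by $g_{\max}$ directly (the paper only invokes boundedness of the operator $\hat{L}$), which cleanly delivers the stated inequality $0\leq\bar{g}\leq g_{\max}$.
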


	\begin{proof}
		The state $|v_N\rangle:=|a_1^{(s)},\ldots,a_N^{(s)}\rangle$ is separable for any $s$, where $s$ indexes the iteration steps of the SPI.
		Further, let $|v'_N\rangle:=|a_1^{(s+1)},\ldots,a_N^{(s+1)}\rangle$ be the next approximation to the separability eigenvector corresponding to an optimal separability eigenvalue. 
		By design, $\langle v'_N|\hat{L}|v_N\rangle\rightarrow\mathrm{max}$ holds such that Theorem \ref{Th:Monotony} applies.
		Thus, the sequence $(g^{(s)})_s$ is monotonous.
		Furthermore, as $\hat{L}$ is a bounded operator, the sequence is also bounded.
		By definition of a convergent series, $(g^{(s)})_s$ converges to, at least, a local maximum.
	\end{proof}
	
\section{Proof of Theorem \ref{Th:Global}}\label{App:Global}

	\begin{theorem}[Global convergence]
		Let $\Sigma$ be a set of separable starting vectors and $(g_{\Phi}^{(s)})_s$ be sequences of expectation values generated by the SPI for a starting vector $|{\Phi}\rangle\in\Sigma$.
		Further, say $\{\bar{g}_{\Phi}\in\mathbb{R}:|{\Phi}\rangle\in\Sigma \text{ and } \bar{g}_{\Phi}=\lim_{s\to\infty} g_{\Phi}^{(s)}\}$ defines the set of optimal expectation values (limits of the converged sequences) for each starting vector.
		The maximal separability eigenvalue for the operator $\hat{L}$ is $g_\mathrm{max}=\mathrm{max}_{\Phi\in\Sigma}\{\bar{g}_{\Phi}\}$, which is the maximum of the limit to the series of expectation values for each starting vector.
	\end{theorem}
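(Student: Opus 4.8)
The plan is to establish the claimed equality through two opposing bounds. The upper bound $\max_{\Phi\in\Sigma}\{\bar{g}_\Phi\}\le g_{\max}$ is immediate: by Theorem \ref{Th:Local} every limit satisfies $\bar{g}_\Phi\le g_{\max}$, and the maximum over $\Sigma$ inherits this bound. The substance of the theorem therefore lies entirely in the reverse inequality, which asserts that at least one starting vector in $\Sigma$ steers the SPI all the way up to $g_{\max}$.

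First I would characterize the possible limits. For a fixed $|\Phi\rangle$, the sequence $(g_\Phi^{(s)})_s$ is monotone by Theorem \ref{Th:Monotony} and convergent by Theorem \ref{Th:Local}, while the associated separable states live in the compact product of unit spheres. Passing to a convergent subsequence of states and using that the arg-max step is realized by Theorems \ref{Th:Forward} and \ref{Th:Backward}, I would compare the limiting value of one further SPI cycle with $\bar{g}_\Phi$; since both equal $\bar{g}_\Phi$, the equality condition of Theorem \ref{Th:Monotony} forces the limiting state to satisfy $|a_1',\ldots,a_N'\rangle=|a_1,\ldots,a_N\rangle$, i.e.\ it is a fixed point of the iteration. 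By the convergence criterion of Sec.\ \ref{subsec:Convergence} this means the accompanying vector $|\chi\rangle$ is $N$-orthogonal, so the limiting state solves the SEE \eqref{Eq:SEE2} and $\bar{g}_\Phi$ is a genuine separability eigenvalue. Hence every converged value is a solution of Eq.\ \eqref{Eq:SEE2}, and $g_{\max}$ is by definition the largest such solution.

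For the lower bound I would exploit the specific structure of $\Sigma$ from Sec.\ \ref{subsec:StartingVectors}, namely that $\{|\Phi_k\rangle\langle\Phi_k|\}_k$ forms a complete operator basis. Let $|a_1^\star,\ldots,a_N^\star\rangle$ be a separability eigenvector attaining $g_{\max}$. Because every positive-semidefinite operator---in particular the projector $|a_1^\star,\ldots,a_N^\star\rangle\langle a_1^\star,\ldots,a_N^\star|$---decomposes into projectors of separability eigenvectors \cite{SV09quasi,SW18}, this eigenvector cannot be orthogonal to all elements of a complete basis; thus some $|\Phi_k\rangle\in\Sigma$ has nonzero overlap with it. It remains to argue that such a starting vector cannot stall at a smaller separability eigenvalue.

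The hard part is precisely this last step: ruling out that the chosen $|\Phi_k\rangle$ converges to a sub-maximal fixed point. This is the separability analogue of the basin-of-attraction statement underlying the PI, where any starting vector with a nonvanishing component along the dominant eigenvector is driven toward the maximal eigenvalue, the only exceptions forming a lower-dimensional set. The plan is to show that sub-maximal separability eigenvectors are unstable fixed points of the SPI map, so that their stable manifolds are nowhere dense; since the complete operator basis $\Sigma$ cannot be contained in such an exceptional set, at least one $|\Phi_k\rangle$ escapes toward $g_{\max}$, yielding $\max_{\Phi\in\Sigma}\{\bar{g}_\Phi\}\ge g_{\max}$. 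Combining the two bounds then gives the asserted equality. I expect the genuine obstacle to be the nonlinearity of the arg-max map which---unlike the essentially linear iteration of the PI---requires a careful continuity and genericity argument to guarantee that overlap with the dominant separability eigenvector is amplified rather than lost along the iteration.
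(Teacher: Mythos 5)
Your framing is right as far as it goes: the upper bound $\max_{\Phi\in\Sigma}\bar g_{\Phi}\leq g_{\max}$ is indeed immediate from Theorem \ref{Th:Local}, and your argument that every limit $\bar g_{\Phi}$ is a genuine separability eigenvalue (via the equality case of Theorem \ref{Th:Monotony} applied to a limit point of the iterates, together with the $N$-orthogonality of $|\chi\rangle$ at a fixed point) is sound and consistent with Sec.~\ref{subsec:Convergence}. The gap is that the entire content of the theorem---the reverse inequality---is left as a plan, and the plan as stated would fail. You propose to show that sub-maximal fixed points have nowhere-dense stable manifolds and to conclude that the operator basis $\Sigma$ cannot be contained in the exceptional set. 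But $\Sigma$ is a \emph{finite} collection of $D^2$ product vectors; any nowhere-dense (indeed any measure-zero) set can contain a finite set, so topological genericity buys you nothing. What rescues the ordinary PI is linear-algebraic, not topological: the bad starting vectors lie in a proper linear subspace (the orthogonal complement of the dominant eigenvector), and a spanning set cannot be contained in a proper subspace. Your overlap observation is the correct analogue of the hypothesis, but the implication ``nonzero overlap with the dominant separability eigenvector $\Rightarrow$ convergence to $g_{\max}$'' is precisely what must be established for the nonlinear arg-max map, and you explicitly leave it unproved.

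The paper's proof (Appendix \ref{App:Global}) takes a different route that sidesteps a direct stability analysis of the nonlinear dynamics: it argues by induction on $N$, using the forward iteration (Theorem \ref{Th:Forward}) and the cascaded structure to reduce the $N$-partite maximization to an $(N-1)$-partite one and ultimately to $N=1$, where the iteration \emph{is} the linear PI and the standard basin-of-attraction theorem applies; the operator basis of starting vectors is invoked at every recursion level so that the nonzero-component hypothesis of the PI theorem holds at the base of the recursion, and the backward iteration (Theorem \ref{Th:Backward}) lifts the resulting maximizer back up through the layers. To complete your version you would need either to carry out that reduction or to supply a genuinely new fixed-point stability analysis of the SPI map; the latter is substantially harder than the ``careful continuity and genericity argument'' you anticipate, and is not what the paper does.
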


	\begin{proof}
		The global convergence of the SPI is shown via proof by induction over the number of subsystems $N$.
		The expression $\hat{L}^{(i)}$ denotes an operator acting on a composition of $i$ Hilbert spaces.
		Further, we use $|v_i\rangle=|a_1,\ldots,a_i\rangle$ and $\bar{g}_i=\lim_{s_1,\ldots,s_i\to\infty}\langle a_1^{(s_1)},\ldots,a_i^{(s_i)}|\hat{L}^{(i)}|a_1^{(s_1)},\ldots,a_i^{(s_i)}\rangle$ as the optimal expectation value of the $i$th subsystem over separable states, with $s_i$ counting the iterations of the SPI in the $i$th subsystem.

	\paragraph*{Basis of induction.}---
		For $N=1$, the SPI is the PI for which the convergence is well known \cite{GvL13}.
		The optimal expectation value for the one-subsystem operator $\hat{L}^{(1)}$ can be found as
		\begin{align}\label{Eq:Induction_Basis}
			\bar{g}_1=\lim_{s_1\to\infty}\langle a_1^{(s_1)}|\hat{L}^{(1)}|a_1^{(s_1)}\rangle,
		\end{align}
		where $|a_1^{(s_1)}\rangle=\hat{L}^{(1)}|a_1^{(s_1-1)}\rangle/\|\hat{L}^{(1)}|a_1^{(s_1-1)}\rangle\|$ and $\||\psi\rangle\|=\langle\psi|\psi\rangle^{1/2}$.

	\paragraph*{Induction hypothesis.}---
		The induction hypothesis reads
		\begin{align}\label{Eq:Induction_Hypothesis}
			\bar{g}_N=\lim_{s_N\to\infty}\langle a_N^{(s_N)}|\hat{L}_{a_1,\ldots,a_{N-1}}^{(N)}|a_N^{(s_N)}\rangle,
		\end{align}
		where $|a_N^{(s_N)}\rangle=\hat{L}_{a_1,\ldots,a_{N-1}}^{(N)}|a_N^{(s_N)}\rangle/\|\hat{L}_{a_1,\ldots,a_{N-1}}^{(N)}|a_{N}^{(s_N)}\rangle\|$.

	\paragraph*{Induction step.}---
		Under the assumption of convergence in $N-1$ subsystems [replacing $N$ by $N-1$ in the induction hypothesis, Eq. \eqref{Eq:Induction_Hypothesis}], we show convergence of the SPI in the $N$th subsystem,
		\begin{align}\label{Eq:Induction_Proof}
		        \bar{g}_N&=
				\underset{a_1,\ldots,a_{N}}{\mathrm{max}}\langle a_1,\ldots,a_N|\hat{L}^{(N)}|a_1,\ldots,a_N\rangle\notag\\
			&=
				\underset{a_1,\ldots,a_{N-1}}{\mathrm{max}}\underset{a_N}{\mathrm{max}}\,\langle a_1,\ldots,a_N|\hat{L}^{(N)}|a_1,\ldots,a_N\rangle
		\end{align}
		In the SPI algorithm, we then define
		\begin{align}\label{Eq:Induction_Forward}
		        |\Psi\rangle=\hat{L}^{(N)}|a_1,\ldots,a_N\rangle
		\end{align}
		and calculate
		\begin{align}\label{Eq:Induction_Backward}
		        |b_N\rangle = \frac{\langle a_1,\ldots,a_{N-1},\,\cdot\,|\Psi\rangle}{||\langle a_1,\ldots,a_{N-1},\,\cdot\,|\Psi\rangle||}.
		\end{align}
		Using these definitions in the calculation of $\bar{g}_N$, we get
		\begin{align*}
			\bar{g}_N=&
				\underset{a_1,\ldots,a_{N-1}}{\mathrm{max}}\underset{a_N}{\mathrm{max}}\,\langle a_1,\ldots,a_N|\hat{L}^{(N)}|a_1,\ldots,a_N\rangle\\
			=&
				\underset{a_1,\ldots,a_{N-1}}{\mathrm{max}}\,\underset{\gamma_N,a_N}{\mathrm{max}}\,\langle \gamma_N|\hat{L}_{a_1,\ldots,a_{N-1}}^{(N)}|a_N\rangle\\
			=&\sqrt{
				\underset{a_1,\ldots,a_{N-1}}{\mathrm{max}}\,
				\underset{\gamma_N,a_N}{\mathrm{max}}\,
				|\langle a_1,\ldots,a_{N-1},\gamma_N|\Psi\rangle|^2
			},
		\end{align*}
		where the second line follows from Theorem \ref{Th:Monotony}.
		By construction---following the induction step---convergence has been reached for the subsystems up to and including ${N-1}$, which leaves a maximization for $|a_N\rangle$ and $|\gamma_N\rangle$,
		\begin{align*}
			\bar{g}_N
			&=\sqrt{\underset{\gamma_N,a_N}{\mathrm{max}}\,\langle a_1,\ldots,a_{N-1},\gamma_N|\Psi\rangle\langle\Psi|a_1,\ldots,a_{N-1},\gamma_N\rangle}.
		\end{align*}
		The solution to this maximization problem is found via the cascaded structure and is equal to $|b_N\rangle$ [see Eq. \eqref{Eq:Induction_Backward}],
		\begin{align*}
			\bar{g}_N
			&=\langle a_1,\ldots,a_{N-1},b_N|\hat{L}^{(N)}|a_1,\ldots,a_{N-1},b_N\rangle.
		\end{align*}
		
		We use the induction hypothesis, Eq. \eqref{Eq:Induction_Hypothesis}, to solve the problem of finding the states $|a_1\rangle,\ldots,|a_{N-1}\rangle$. 
		Then we need to maximize $\hat{L}_{a_1,\ldots,a_{N-1}}^{(N)}$.
		Since this is an operator in one subsystem, the PI can be applied to maximize the expectation value.
		This is shown in the induction hypothesis.
		As the PI is guaranteed to converge, Eq. \eqref{Eq:Induction_Hypothesis} will indeed return a separable vector which optimizes the expectation value of $\hat{L}^{(N)}$.
		Thus, for a single starting vector, the SPI finds a separability eigenvector, which might correspond to the maximal separability eigenvalue.

		Convergence towards the separability eigenvector corresponding to the globally maximal separability eigenvalue is guaranteed by the choice of starting vectors.
		The operator basis is chosen as a set of starting vectors after every forward iteration.
		The PI converges towards the dominant eigenvalue of a matrix for a given starting vector, if the decomposition of the starting vector into the eigenbasis of the matrix has a nonzero contribution of the eigenvector corresponding to the maximal eigenvalue.
		As the operator basis spans the considered operator space, the separability eigenvector will have a nonzero contribution to the decomposition of at least one of the starting vectors.
	\end{proof}

\section{Brief derivation of the SEEs}\label{App:SEE}

	For a self-consistent reading of the present contribution, we review the derivation of the multipartite separability eigenvalue equations (see Ref. \cite{SV13}).
	Here, the derivation is based on an equivalent approach (see Ref. \cite{RSV15}), which relies on the Rayleigh quotient and is also the main idea behind the PI.

	The (multipartite) Rayleigh quotient reads
	\begin{align}
		\label{Eq:Rayleigh}
		R_{\hat{L}}(a_1,\ldots,a_N):=\frac{\langle a_1,\ldots,a_N|\hat{L}|a_1,\ldots,a_N\rangle}{\langle a_1,\ldots,a_N|a_1,\ldots,a_N\rangle},
	\end{align}
	which is the expectation value of operator $\hat{L}$ for a possibly unnormalized vector $|a_1,\ldots,a_N\rangle$.
	To relate $R$ to multivariate polynomials, we can think of $|a_j\rangle$ in terms of wave functions being Taylor-expanded in terms of polynomials of the order $d_j-1$.
	Thus, we can conclude that the desired task of maximizing the Rayleigh quotient is equal to both, maximizing a multivariate polynomial and finding the maximal expectation value of $\hat{L}$ with respect to separable states, i.e., finding its maximal separability eigenvalue.
	
	The optimal values of the Rayleigh quotient in Eq. \eqref{Eq:Rayleigh} are found for
	\begin{align}
	\begin{aligned}
		\label{Eq:PartDerrivatives}
		0=&\frac{\partial R_{\hat L}(a_1,\ldots,a_N)}{\partial\langle a_j|}
		\\
		=&\frac{\hat L_{a_1,\ldots,a_{j-1},a_{j+1},\ldots,a_N}|a_j\rangle}{\langle a_1,\ldots,a_N|a_1,\ldots,a_N\rangle}
		-g\frac{|a_j\rangle}{\langle a_j|a_j\rangle}
	\end{aligned}
	\end{align}
	for $j=1,\ldots, N$, where we use the notation $g=R_{\hat L}(a_1,\ldots,a_N)$ and the so-called reduced operator $\hat L_{a_1,\ldots,a_{j-1},a_{j+1},\ldots,a_N}=\langle a_1,\ldots,a_{j-1},\,\cdot\,,a_{j+1},\ldots,a_N|\hat L|a_1,\ldots,a_{j-1},\,\cdot\,,a_{j+1},\ldots,a_N\rangle$, acting solely on the $j$th subsystem (cf. Refs. \cite{SV13,RSV15}).

	As the Rayleigh quotient is invariant under the norm of the vector, we may assume $\langle a_j|a_j\rangle=1$.
	Consequently, the optimization of the Rayleigh quotient [cf. Eq. \eqref{Eq:PartDerrivatives}] yields the SEE in the first form as
	\begin{align}
		\label{Eq:DerivedSEE}
		\hat{L}_{a_1,\ldots,a_{j-1},a_{j+1},\ldots,a_N}|a_j\rangle &= g|a_j\rangle
	\end{align}
	for $j=1,\ldots,N$.
	The SPI does not evaluate this first form;
	rather, it solves Eq. \eqref{Eq:SEE2}, the second form of the SEE, which has been shown to be equivalent to Eq. \eqref{Eq:DerivedSEE} (a comprehensive proof can be found in the Supplement Material to Ref. \cite{SV13}).
	\vspace*{.5cm}

\end{document}